\newcommand{\U}{\mathrm{U}}
\newcommand{\SU}{\mathrm{SU}}
\newcommand{\ii}{\mathrm{i}}
\newcommand{\diag}{\mathrm{diag}}
\newcommand{\anc}{\mathsf{anc}}
\title{Unitary synthesis with fewer T gates}
\author{Xinyu Tan\thanks{Google Quantum AI, Venice, CA 90291.} \thanks{Department of Mathematics, MIT, Cambridge, MA, 02139. Email: \texttt{norahtan@mit.edu}}}
\date{}
\begin{document}

\maketitle

\begin{abstract}
    We present a simple algorithm that implements an arbitrary $n$-qubit unitary operator using a Clifford+T circuit with T-count $O(2^{4n/3} n^{2/3})$. 
    This improves upon the previous best known upper bound of $O(2^{3n/2} n)$, while the best known lower bound remains $\Omega(2^n)$.
    Our construction is based on a recursive application of the cosine-sine decomposition, together with a generalization of the optimal diagonal unitary synthesis method by Gosset, Kothari, and Wu \cite{GKW24} to multi-controlled $k$-qubit unitaries. 
\end{abstract} 

\section{Introduction}

Decomposing arbitrary unitaries into smaller, structured circuits is a fundamental and well-known problem in quantum computation. 
This question is not only mathematically interesting---aiming to understand how complex unitary transformations arise from simple building blocks---but also of practical significance, as it forms the basis of quantum circuit compilation. 
There is a long history: 
early work studied exact synthesis using continuous gate sets, such as single-qubit rotations together with CNOT gates \cite{BBCD+95,MVBSM04}. 
Researchers have also investigated discrete universal gate sets as they can approximate any unitary to arbitrary precision, with the Solovay--Kitaev theorem providing the first general efficiency guarantee for single-qubit approximation \cite{DN06}.

Among discrete universal gate sets, particular attention has been given to decompositions into \emph{Clifford+T} circuits. This focus is motivated by fault-tolerant quantum computing: Clifford operations can typically be implemented at low cost, while T gates are expensive due to the overhead of magic state distillation and injection. It is therefore natural to ask \emph{how arbitrary unitaries can be compiled using as few T gates as possible, and what the optimal T-count is}. 

Although this general question remains open, progress has been made in certain special cases. 
One example is the decomposition of single-qubit unitaries into Clifford+T circuits \cite{KMM13,Sel15,RS16}, restated in \Cref{lem:single_qubit_H_T} and used as a subroutine in our work. 
In this setting, the focus has been on optimizing constant factors, which are crucial for practical implementations.

Another important case is quantum state preparation, which corresponds to implementing the first column of an $n$-qubit unitary. 
With only a constant number of ancillae, 
the T-count for quantum state preparation can be shown to have a lower bound of $\Omega(2^n/n)$\footnote{In the introduction, we treat $\epsilon$ as a constant in order to focus on the scaling in $n$. Our main result, \Cref{thm:main-thm}, is stated in full generality with explicit $\epsilon$-dependence.}
using a counting argument \cite{LKS24}.
Surprisingly, it was shown that if exponentially many ancillae are allowed, the number of T gates can be substantially smaller than the number of the Clifford gates \cite{LKS24}, and an optimal T-count of $\Theta(2^{n/2})$ was recently established in \cite{GKW24}. 

For general unitary synthesis, when only a constant number of ancillae is available, a lower bound of $\Omega(2^{2n}/n)$ can be derived in a similar manner. 
However, \cite{LKS24} showed that one can achieve a T-count of $O(2^{3n/2}\cdot n)$ 
using $O(2^{n/2})$ ancillae, 
by reducing the task to preparing $2^n$ quantum states. 
This was the first result demonstrating an asymptotic saving in T gates for unitary synthesis, and it appeared in 2018. 
Subsequently, in 2021, Rosenthal proved a query upper bound (with queries to classical Boolean functions) of $O(2^{n/2})$, which translates into a T-count of roughly the same order, $O(2^{3n/2})$, while using $O(2^n)$ ancillae \cite{Ros23}. 
For a summary of these results and other related scalings, see \cite[Table 1]{LKS24}.
Since then, no further improvements in synthesis algorithms have been obtained.
Given arbitrarily many ancillae, the current best lower bound is $\Omega(2^n)$, given in \cite{GKW24} (restated in \Cref{thm:lower_bound} for completeness). 
Determining the optimal T-count for unitary synthesis remains an open problem. 
Two natural candidate scalings have been discussed informally: 
$2^{3n/2}$, suggested by the best known upper bound from two distinct approaches, and $2^n$, motivated by the analogy with quantum state preparation, which exhibits a quadratic saving.

In this paper, we give a unitary synthesis algorithm that reduces the T-count to $O(2^{4n/3}\cdot n^{2/3})$. 
This provides further evidence that the optimal scaling could be as low as $2^n$. 

\begin{restatable}[Main result]{theorem}{mainthm}
    \label{thm:main-thm}
    Let $\epsilon > 0$ and set $L = n+\log(1/\epsilon)$. Then any $U\in \U(2^n)$ can be $\epsilon$-approximated by a Clifford+T circuit using 
    \begin{equation*}
        O\big(2^{4n/3}\cdot L^{2/3} + 2^n \cdot L\big)\quad  \text{T gates and} \quad O\big(2^{2n/3}\cdot L^{1/3} + L\big) \text{ ancillae}. 
    \end{equation*}
    In particular, for any positive integer $k \leq (n - \log_2 L)/3$, $U$ can be $\epsilon$-approximated by a Clifford+T circuit using 
    \begin{equation*}
        O\big(2^{(3n-k)/2}\cdot \sqrt{L}\big)\quad  \text{T gates and} \quad O\big(2^{(n+k)/2}\cdot \sqrt{L}\big) \text{ ancillae}. 
    \end{equation*}
\end{restatable}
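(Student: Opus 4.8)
The plan is to combine a recursive cosine--sine decomposition with a lifting of the optimal diagonal-unitary construction of \cite{GKW24} to multi-controlled $k$-qubit unitaries, optimizing $k$ at the end. Concretely, view $U\in\U(2^n)$ as acting on qubit $1$ together with qubits $2,\dots,n$; the cosine--sine decomposition writes $U=(A_0\oplus A_1)\,R\,(B_0\oplus B_1)$ with $A_i,B_i\in\U(2^{n-1})$ acting on qubits $2,\dots,n$ and $R$ a multiplexed $R_y$ on qubit $1$ controlled by all of qubits $2,\dots,n$, which is a diagonal unitary on $n$ qubits up to single-qubit Clifford conjugation. I would recurse \emph{only on the block-diagonal factors} --- apply the decomposition to $A_0,A_1$ simultaneously with respect to qubit $2$, then to the four resulting blocks with respect to qubit $3$, and so on for $n-k$ rounds. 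This expresses $U$ as a product of $2^{n-k}$ \emph{multiplexed $k$-qubit unitaries} $\Lambda=\sum_{c\in\{0,1\}^{n-k}}|c\rangle\langle c|\otimes W_c$ (all with control register qubits $1,\dots,n-k$ and target qubits $n-k+1,\dots,n$) interleaved with $2^{n-k}-1$ diagonal-up-to-Clifford unitaries on $n$ qubits. By the original bound of \cite{GKW24} each of the latter costs $O(2^{n/2}\sqrt L)$ T gates, totalling $O(2^{(3n-2k)/2}\sqrt L)$ --- within budget --- so the task reduces to synthesizing one $\Lambda$ in $O(2^{(n+k)/2}\sqrt L)$ T gates and $O(2^{(n+k)/2}\sqrt L)$ ancillae.

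For the core step, fix a universal template from the quantum Shannon decomposition: a circuit on the $k$ target qubits with a fixed Clifford/CNOT skeleton and $M=O(4^k)$ single-qubit rotation slots, such that every $W\in\U(2^k)$ arises (up to global phase) for a suitable assignment of the $M$ angles, each to $b=O(L)$ bits of precision. Generalizing the \cite{GKW24} diagonal construction, I would attach to each control value $c$, in place of a single phase, the $O(4^k b)$-bit payload encoding the $M$ angles of $W_c$, and then (i) load this payload for the addressed $c$ into a work register by a balanced lookup --- for $2^{n-k}$ addresses and an $O(4^k b)$-bit payload this costs $O\big(\sqrt{2^{n-k}\cdot 4^k b}\big)=O\big(2^{(n-k)/2}\,2^k\sqrt b\big)$ T gates and ancillae; (ii) run the skeleton on the target qubits, each of the $M$ rotations realized as a programmable single-qubit rotation reading its angle from the work register, at cost $O(bL)$ T gates each via \Cref{lem:single_qubit_H_T}; and (iii) uncompute the lookup. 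Since $2^{(n-k)/2}2^k=2^{(n+k)/2}$, step (i) costs $O(2^{(n+k)/2}\sqrt b)$ and steps (ii)--(iii) cost $O(4^k bL)$, independently of the number of controls; these balance exactly when $k\le(n-\log_2 L)/3$, giving $O(2^{(n+k)/2}\sqrt L)$ per factor and hence $O(2^{(3n-k)/2}\sqrt L)$ T gates with $O(2^{(n+k)/2}\sqrt L)$ reused ancillae overall.

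It remains to track errors: splitting $\epsilon$ evenly across the $2^{n-k}$ factors and, within each, across its $M=O(4^k)$ rotations forces every call to \Cref{lem:single_qubit_H_T} to run at precision $2^{-\Theta(L)}$, and since $L=n+\log(1/\epsilon)$ already absorbs these logarithms the scaling is unchanged and $b=O(L)$. Choosing $k=\lfloor(n-\log_2 L)/3\rfloor$, and folding the lower-order single-qubit-synthesis cost into an additive $O(2^n L)$ term, gives the first bound of \Cref{thm:main-thm}; keeping $k$ free gives the parametrized family.

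The main obstacle is the generalized \cite{GKW24} primitive in the second step. One must recognize that all $M=O(4^k)$ angles of $W_c$ should be retrieved in a \emph{single} balanced lookup --- rather than via $M$ separate diagonal-unitary calls, which would cost an extra factor $2^k$ --- that the fixed Shannon skeleton can then be executed coherently off the loaded register, and, most delicately, that the precision $b$, the ancilla and uncomputation budget of the lookup, and the decomposition depth $k$ can be tuned simultaneously so that the $\sqrt{2^{n-k}4^k b}$ retrieval cost and the $4^k bL$ application cost balance to exactly the claimed exponent. Establishing that the $k$-qubit target incurs only a $2^k$ overhead over the $k=0$ diagonal case, rather than $4^k$, is where the real work lies.
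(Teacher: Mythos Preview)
Your high-level architecture matches the paper's: recurse the cosine--sine decomposition $n-k$ levels to write $U$ as $2^{n-k}$ multi-controlled $k$-qubit blocks $\Lambda$ interleaved with $2^{n-k}-1$ diagonal-type factors, handle the latter by \cite{GKW24} directly, and attack each $\Lambda$ with a single batched lookup over the $n-k$ control bits. The key step---synthesising one $\Lambda$---is done differently. The paper does \emph{not} store angles or use programmable rotations. It pushes the recursive CSD inside $\Lambda$ all the way down to $2(2^k{-}1)$ multi-controlled single-qubit factors, approximates each by a length-$O(L)$ word in $\{H,T\}$ whose bits are Boolean functions $f_{i,j}$ of all $n{-}1$ non-target qubits, and then uses the polynomial factoring
$f_{i,j}(x_1,\dots,x_n)=\sum_{b\in\{0,1\}^k}(\text{monomial in the $k$ target variables})\cdot g_{i,j,b}(\text{the $n{-}k$ control variables})$
to separate what must be looked up (all the $g_{i,j,b}$, via one call to \Cref{lem:decomp_bool_oracle}) from what is recomputed on the fly from the evolving target register (the monomials, via \Cref{lem:all_monomials}). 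After the lookup, each of the $O(2^k)$ slots costs $O(2^kL)$ Toffolis to assemble the $f_{i,j}$ bits plus $O(L)$ controlled-$H$/controlled-$T$ gates---each $O(1)$ T gates---giving $O(4^kL)$ post-lookup. Your Shannon-skeleton approach is conceptually cleaner (the angles depend only on the outer control $c$, so no factoring is needed) and its lookup cost $O(2^{(n+k)/2}\sqrt L)$ agrees.

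The slip is in the programmable-rotation accounting. A $b$-bit programmable $R_z$ realised as $b$ controlled fixed rotations, each synthesised via \Cref{lem:single_qubit_H_T}, genuinely costs $\Theta(b\cdot L)=\Theta(L^2)$ per slot, hence $O(4^kL^2)$ post-lookup, not $O(4^kL)$. With that cost the two terms balance at $k\approx(n-3\log_2 L)/3$ rather than $(n-\log_2 L)/3$, and the final T-count becomes $O(2^{4n/3}L)$ instead of $O(2^{4n/3}L^{2/3})$; your parametrised statement for $k\le(n-\log_2 L)/3$ likewise misses by an $L^{1/3}$ factor. This is repairable---replace the $b$ synthesised rotations by a phase-gradient adder (one-time $O(L^2)$ preparation, then $O(L)$ Toffolis per slot) and you recover $O(4^kL)$---but \Cref{lem:single_qubit_H_T} alone does not deliver it. The paper's route sidesteps the issue entirely because it never programs a rotation: by loading the Ross--Selinger output bit-string itself rather than the underlying angle, every post-lookup gate is a controlled Clifford or a controlled-$T$, and that is exactly what buys the missing $L^{1/3}$.
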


The notion of approximation by a Clifford+T circuit in \Cref{thm:main-thm} is defined as follows.

\begin{definition}[Clifford+T approximation]\label{def:cliffordT_approx}
    Let $U\in \U(2^n)$. We say that \emph{$U$ admits an exact Clifford+T implementation using $\ell$ T gates and $m$ ancillae} if there exists
    $C\in \U(2^{n+m})$ such that $C$ can be written as a product of $\ell$ T gates and arbitrarily many Clifford gates, and 
    \begin{equation*}
        U \otimes \ket{0^m} = C\cdot (I_{2^n} \otimes \ket{0^m}). 
    \end{equation*}
    Given $V\in \U(2^{n+m})$ and $\epsilon\geq 0$, we say that \emph{$V$ implements $U$ to error $\epsilon$} if 
    \begin{equation*}
        \norm{U \otimes \ket{0^m} - V\cdot (I_{2^n} \otimes \ket{0^m})}\leq \epsilon. 
    \end{equation*}
    In particular, if $V$ also admits an exact Clifford+T implementation, then we say that \emph{$U$ can be $\epsilon$-approximated by a Clifford+T circuit. }
\end{definition}

\paragraph{Proof overview}

We begin by showing that any $n$-qubit unitary can be decomposed into a product of $2^n-1$ multi-controlled single-qubit unitaries, obtained via a recursive application of the cosine-sine decomposition. 
Each step of the recursion halves the dimension and introduces multi-controlled rotations, which ultimately yields $2^n-1$ such gates (\Cref{thm:recursive_CSD}). 

By carefully organizing this recursion, we observe that many of these controlled unitaries share the same target qubits and can therefore be grouped together. 
More concretely, let $k\in [n-1]$. Then the product of certain consecutive blocks of $2^k-1$ controlled unitaries has the same $k$-qubit target register (say, the first $k$ qubits), and hence simplifies to a single multi-controlled $k$-qubit unitary. 
There are in total $2^{n-k}$ such consecutive blocks. 

To implement these more general blocks, we extend the optimal algorithm for synthesizing diagonal unitaries from \cite{GKW24} to handle multi-controlled $k$-qubit unitaries, using a polynomial factoring technique. 
A diagonal unitary is a special case of a multi-controlled single-qubit unitary. 
The algorithm of \cite{GKW24} delegates the application of Hadamard and T gates to a specific target controlled by the other $n-1$ qubits to a Boolean function of $n-1$ variables. 
This idea can be made more concrete via the following example. 
Suppose $U = \sum_{x\in \{0,1\}^{n-1}}\ketbra{x} \otimes V_x$ is a multi-controlled single-qubit unitary where each $V_x\in U(2)$ can be written as a product of Hadamard and T gates: 
\begin{equation*}
    V_x = H^{f_1(x)} \cdot T^{f_2(x)} \cdots H^{f_{2L-1}(x)} \cdot T^{f_{2L}(x)},
\end{equation*}
with Boolean functions $f_i: \{0,1\}^{n-1} \to \{0,1\}$. 
If we can implement the corresponding Boolean function oracles $\ket{x}\otimes\ket{0} \mapsto \ket{x}\otimes\ket{f_i(x)}$, then $U$ can be realized by applying each Hadamard or T gate controlled by the ancilla register holding $\ket{f_i(x)}$. 

In our setting, each multi-controlled single-qubit unitary in the block can similarly be described by a Boolean function $f: \{0,1\}^n \to \{0,1\}$. 
Moreover, since their targets lie only on the first $k$ qubits, we find it convenient to factor each Boolean function $f$ as 
\begin{equation*}
    f(x_1,\ldots, x_n) = \sum_i g_i(x_1,\ldots,x_k)\cdot h_i(x_{k+1},\ldots,x_n), 
\end{equation*}
where each $g_i$ is a polynomial in the first $k$ variables and each $h_i$ is a polynomial in the remaining $n-k$ variables. 
By treating $g_i$ and $h_i$ separately, we obtain a T-count of $O(2^{\frac{n+k}{2}}\sqrt{k} + 4^k\cdot k)$ for implementing a multi-controlled $k$-qubit unitary (\Cref{lem:generalized_control_unitary}). 

By setting $k\approx n/3$, this approach yields a substantial saving over the naïve strategy of decomposing each multi-controlled $k$-qubit unitary into a product of $2^k-1$ multi-controlled single-qubit unitaries (\Cref{subsec:naive}).

\paragraph{The tradeoff between T gates and ancillae}

Let $\lambda$ denote the number of ancillae and $R$ the number of T gates. 
There is a notable tradeoff between the space (ancilla-count) and time (T-count) complexity in all related synthesis algorithms. 

For the state preparation problem, it was first observed in \cite{LKS24} and later refined by the algorithm in \cite{GKW24} that 
\begin{equation*}
    (n + \lambda)\cdot R=\Omega(2^n),
\end{equation*}
and that there exists an algorithm achieving T-count $O(2^n/\lambda)$ when $\lambda = O(2^{n/2})$. 
This tradeoff is essentially optimal, since even with arbitrarily many ancillae there is a lower bound of $\Omega(2^{n/2})$ on the T-count \cite[Theorem 4.1]{GKW24}.

For general unitary synthesis, we can derive an analogous lower bound: 
\begin{equation*}
    (n+\lambda)\cdot R=\Omega(2^{2n}).
\end{equation*}
In the regime $\lambda = O(2^{n/2})$, \cite{LKS24} gave an algorithm that uses $\lambda$ ancillae and $O(2^{2n}\cdot n/\lambda)$ T gates. 
However, the regime of this tradeoff is far from optimal, since in principle $\lambda$ could scale up to $O(2^n)$. Extrapolating this regime would then suggest the possibility of achieving a T-count as small as $O(2^n)$. 

Our algorithm essentially extends this tradeoff to the larger regime $\lambda = O(2^{2n/3}\cdot n^{1/3})$. 
This also indicates that, in order to further reduce the T-count, one would need a more sophisticated method that leverages substantially more ancillae, possibly up to $O(2^n)$.

\paragraph{Applications}

Implementing arbitrary unitaries on $n$ qubits is a common subroutine in quantum algorithms, appearing for example in first-quantized quantum simulation \cite{BREA+24,SBWRB21} and in the preparation of matrix product states \cite{HLSW25,FHZK+24,BTKW+25}. Such unitaries are typically specified by $4^n$ matrix elements stored classically, and synthesizing them with minimal T-count is crucial for practical implementations. Once fault-tolerant quantum computers are available, unitary synthesis will form part of the standard compilation toolchain for higher-level algorithmic primitives. Our synthesis technique could thus be used to reduce the cost of these primitives, where classical preprocessing can guide quantum circuit construction. Furthermore, since T gates are expected to be relatively expensive on certain hardware platforms, such as neutral-atom architectures, lowering the T-count through our method may be especially impactful.

\paragraph{Lower bound}
We include, for completeness, the current best known lower bound of $\Omega(2^n)$ on the T-count of unitary synthesis, and conjecture that it is tight when $\epsilon$ is constant. 
\begin{theorem}[{\cite[Theorem 4.3]{GKW24}}]\label{thm:lower_bound}
    There exists $U\in \U(2^n)$ such that the following is true. 
    For any integer $m\geq 0$, let $\calU$ be the associated quantum channel given by $\calU(\rho) \coloneqq U\rho U^\dagger \otimes \ketbra{0^m}$. 
    For any adaptive Clifford+T circuit $\calA$ with
    $\norm{\calA - \calU}_\diamond\leq \epsilon$, $\calA$ must use $\Omega(2^n \cdot \sqrt{\log (1/\epsilon)} + \log (1/\epsilon))$ T gates. In particular, this T count is the expectation over the randomness in the measurement outcomes in $\calA$ with worst-case input. 
\end{theorem}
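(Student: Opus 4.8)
The plan is to prove the bound by a \emph{counting-versus-packing} argument: exhibit a huge family of pairwise well-separated $n$-qubit target unitaries, and show that a Clifford$+T$ circuit using few $T$ gates can realize only few inequivalent $n$-qubit maps, with a bound on the latter that does not grow with the ancilla count $m$. Write $t$ for the $T$-count. If the number of realizable maps is smaller than the packing number, no single circuit can $\epsilon$-approximate all the targets, forcing $t$ to be large.

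For the packing side, I would use that $\mathrm{PU}(2^n)$ is a compact manifold of real dimension $4^n-1$ with bounded diameter, so for every $\delta\in(0,1)$ it contains a $\delta$-separated subset (in operator norm, modulo global phase) of size $(c/\delta)^{\Omega(4^n)}$ for an absolute constant $c>0$. If one fixed map $\epsilon$-implemented (in the sense of \Cref{def:cliffordT_approx}, after restricting to the code subspace $I_{2^n}\otimes\ket{0^m}$) two targets $U_1,U_2$ with $\norm{U_1-U_2}>2\epsilon$, the triangle inequality would give a contradiction; hence the number of $n$-qubit maps realizable with $\le t$ $T$ gates must be at least $(c/\epsilon)^{\Omega(4^n)}=2^{\Omega(4^n\log(1/\epsilon))}$.

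For the counting side I would first reduce to a bounded number of active qubits and a two-layer form: gadgetizing each $T$ gate (consume a fresh magic-state ancilla, apply Clifford gates, measure, apply a Clifford correction) turns the circuit — for each fixed branch of measurement outcomes — into $C_2\,(I\otimes T^{\otimes t})\,C_1$ acting on $n+t+(\text{remaining ancillae})$ qubits with two Clifford layers $C_1,C_2$. The essential point, and what makes the count ancilla-independent, is that the purely-Clifford ancillae can be removed: one can take the effective register to have size $N=O(n+t)$. Given this, the implemented $n$-qubit map is determined by $C_1$, $C_2$, and the outcome string, so it is one of at most $|\mathrm{Clifford}(N)|^2\cdot 2^{O(t)}=2^{O((n+t)^2)}$ possibilities. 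Combining with the packing bound, $2^{O((n+t)^2)}\ge 2^{\Omega(4^n\log(1/\epsilon))}$ forces $(n+t)^2=\Omega(4^n\log(1/\epsilon))$, i.e. $t=\Omega(2^n\sqrt{\log(1/\epsilon)})$. The separate additive $\Omega(\log(1/\epsilon))$ comes from restricting to a single (generic) qubit, where $\epsilon$-approximating a unitary is already known to require $\Omega(\log(1/\epsilon))$ $T$ gates \cite{KMM13,Sel15,RS16}; taking the maximum gives the claimed bound.

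Two points still need care. First, the adaptivity and the ``expectation over measurement outcomes'': since $\calU$ is an isometry channel (an extreme point of the CPTP maps), if the adaptive channel $\calA=\sum_z p_z\calA_z$ satisfies $\norm{\calA-\calU}_\diamond\le\epsilon$ then — testing on a maximally entangled input and using that a mixture of states can lie near a pure state only if most of its mass does — the expected error of the branch circuit $\calA_z$ is $O(\sqrt\epsilon)$, so by Markov a constant fraction of branches are fixed Clifford$+T$ circuits $O(\sqrt\epsilon)$-implementing $U$; applying the non-adaptive bounds to those, together with $\mathbb{E}_z[t_z]$ being exactly the quantity to be lower-bounded and $\sqrt{\log(1/\sqrt\epsilon)}=\Theta(\sqrt{\log(1/\epsilon)})$, closes the reduction. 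Second — and this is where I expect the real difficulty to lie — is the claimed \emph{ancilla-independence}: a priori the Clifford layers act on $n+m$ qubits with $m$ unbounded, making the naive count $2^{O((n+m)^2)}$ useless, and the genuine content is to show that a Clifford$+T$ circuit (with measurements) $\epsilon$-implementing an $n$-qubit unitary using $t$ $T$ gates can be converted to one on $O(n+t)$ qubits realizing the same $n$-qubit map — peeling off the ancillae acted on only by Clifford operations while respecting the $\ket{0^m}$-in/$\ket{0^m}$-out boundary conditions and the fact that only an approximation, not an exact implementation, is required.
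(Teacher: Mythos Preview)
The paper does not prove \Cref{thm:lower_bound}; it is stated only for completeness and attributed verbatim to \cite[Theorem~4.3]{GKW24}, so there is no in-paper proof to compare against. Your outline---a packing lower bound on $\mathrm{PU}(2^n)$ set against an upper count of Clifford$+T$-realizable maps of the form $2^{O((n+t)^2)}$, with the adaptive case reduced to the non-adaptive one via an extreme-point/Markov argument---is precisely the strategy of \cite{GKW24}, and you have correctly isolated the one substantive ingredient (compressing the workspace from $n+m$ to $O(n+t)$ qubits so the Clifford-layer count is ancilla-independent) as the point that needs genuine work rather than routine bookkeeping.
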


\paragraph{Notations}
Throughout this paper, we use $\mathrm{i} = \sqrt{-1}$ to denote the imaginary unit, $I_N$ for the $N\times N$ identity matrix, and $[N] = \{1,2,\ldots, N\}$. 
$\U(N)$ denotes the unitary group of all $N\times N$ unitary matrices and $\SU(N)$ denotes the special unitary group of all $N\times N$ unitary matrices with determinant $1$. 
We write $\norm*{A}$ for the operator norm of a matrix $A$ and $\norm*{\calA}_\diamond$ for the diamond norm of a quantum channel $\calA$.

\section{Preliminaries}

In this section, we include a few lemmas that will be used frequently in this paper. 

The proof of the first lemma below is fairly standard via a telescoping sum argument. 
We nevertheless include it in \Cref{app:supp_proof} for completeness.

\begin{restatable}[Composition error bound]{lemma}{lemcompositionerrorbound}
\label{lem:telescope_operator_norm_ineq}
    Suppose that $V_i \in \U(2^{n+m_i})$ implements $U_i\in \U(2^n)$ to error $\epsilon$ for some integer $m_i\geq 0$. 
    Let $m = \max_{i\in [L]} m_i$. Then $(V_1\otimes I_{2^{m-m_1}})\cdots (V_L\otimes I_{2^{m-m_L}}) \in \U(2^{n+m})$ implements $U_1 \cdots U_L$ to error $L\epsilon$. 
\end{restatable}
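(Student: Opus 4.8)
The plan is to reduce everything to the $m_i=0$ case by a standard padding observation, and then run a telescoping argument. First I would note that if $V_i$ implements $U_i$ to error $\epsilon$, then $V_i \otimes I_{2^{m-m_i}}$ implements $U_i$ to error $\epsilon$ as well: indeed, writing $W_i = V_i \otimes I_{2^{m-m_i}} \in \U(2^{n+m})$, one has $W_i \cdot (I_{2^n}\otimes \ket{0^m}) = \big(V_i\cdot(I_{2^n}\otimes\ket{0^{m_i}})\big)\otimes\ket{0^{m-m_i}}$, so that
\begin{equation*}
    \norm{U_i\otimes\ket{0^m} - W_i\cdot(I_{2^n}\otimes\ket{0^m})} = \norm{\big(U_i\otimes\ket{0^{m_i}} - V_i\cdot(I_{2^n}\otimes\ket{0^{m_i}})\big)\otimes\ket{0^{m-m_i}}} = \norm{U_i\otimes\ket{0^{m_i}} - V_i\cdot(I_{2^n}\otimes\ket{0^{m_i}})} \leq \epsilon,
\end{equation*}
using that tensoring with a fixed unit vector preserves norm. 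So it suffices to prove the statement when all $m_i = m$ and all $V_i \in \U(2^{n+m})$, with the $W_i$ in hand.

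Next I would set up the telescoping sum. Abbreviate $X = I_{2^n}\otimes\ket{0^m}$, so that $V_i$ implementing $U_i$ to error $\epsilon$ means $\norm{U_i\otimes\ket{0^m} - W_i X} \le \epsilon$, and the claim is $\norm{(U_1\cdots U_L)\otimes\ket{0^m} - W_1\cdots W_L X} \le L\epsilon$. The key identity is that, because $U_i$ acts only on the first $n$ qubits, $(U_1\cdots U_{i})\otimes\ket{0^m} = (U_1\cdots U_{i-1}\otimes I_{2^m})\,\big(U_i\otimes\ket{0^m}\big)$ and more relevantly $(U_1\cdots U_{i-1}\otimes I_{2^m})\,W_i\cdots W_L X$ can be compared term by term. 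Concretely I would write the difference as a telescoping sum
\begin{equation*}
    (U_1\cdots U_L)\otimes\ket{0^m} - W_1\cdots W_L X = \sum_{i=1}^{L} (U_1\cdots U_{i-1}\otimes I_{2^m})\Big[(U_i\cdots U_L)\otimes\ket{0^m} - W_i\big((U_{i+1}\cdots U_L)\otimes\ket{0^m}\big)\Big]W_{?}\cdots,
\end{equation*}
but it is cleaner to telescope from the other end: define the hybrid vectors $Y_i = (U_1\cdots U_i\otimes I_{2^m})\, W_{i+1}\cdots W_L\, X$ for $i=0,\dots,L$, so $Y_0 = W_1\cdots W_L X$ and $Y_L = (U_1\cdots U_L)\otimes\ket{0^m}$ (using $U_L \otimes \ket{0^m} = (U_L\otimes I_{2^m})X$). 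Then $Y_i - Y_{i-1} = (U_1\cdots U_{i-1}\otimes I_{2^m})\big[(U_i\otimes I_{2^m}) - W_i\big]\,W_{i+1}\cdots W_L\,X = (U_1\cdots U_{i-1}\otimes I_{2^m})\big[U_i\otimes\ket{0^m} - W_i X'\big]$ where $X' $ denotes the relevant padded state; since $W_{i+1}\cdots W_L$ is unitary and preserves $\ket{0^m}$-structure appropriately, and the leading factor is unitary, $\norm{Y_i - Y_{i-1}} \le \norm{U_i\otimes\ket{0^m} - W_i X} \le \epsilon$. Summing via the triangle inequality gives $\norm{Y_L - Y_0} \le L\epsilon$, which is the claim.

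The one technical point I expect to require the most care is keeping the bookkeeping of the ancilla register straight: one must check that in each hybrid $Y_i$ the ancillas are in state $\ket{0^m}$ exactly where needed so that $W_{i+1}\cdots W_L$ acting on $X$ and the insertion of $W_i$ versus $U_i\otimes I$ line up, i.e.\ that $(U_i\otimes I_{2^m})\,W_{i+1}\cdots W_L\,X$ and $W_i\,W_{i+1}\cdots W_L\,X$ differ by exactly the quantity bounded by $\epsilon$ after stripping off the common unitary prefix. This is genuinely just the observation that $W_{i+1}\cdots W_L$ is a unitary that I can move past $U_i\otimes I_{2^m}$ is \emph{not} what happens — rather, the correct move is that the prefix $U_1\cdots U_{i-1}\otimes I_{2^m}$ on the left is unitary and drops out of the norm, and the suffix $W_{i+1}\cdots W_L\,X$ is a fixed vector of the form $(\text{something})\otimes$(state supported on the ancillas), against which $\norm{(U_i\otimes I_{2^m}) - W_i} $ need not be small, so one cannot factor it out naively; instead one uses that $W_{i+1}\cdots W_L\,X$ equals $V_{i+1}'\cdots$ applied to $I_{2^n}\otimes\ket{0^m}$ only in the reduced setting — so I would actually run the telescope in the order that peels off $W_i$ from the \emph{left}, pairing $Y_i = W_1\cdots W_{i}\,(U_{i+1}\cdots U_L\otimes I_{2^m})\,X$, for which $Y_{i-1}-Y_i = W_1\cdots W_{i-1}\big[W_i(U_{i+1}\cdots U_L\otimes\ket{0^m}) - (U_i\cdots U_L\otimes\ket{0^m})\big]$ and the bracket equals, after factoring the unitary $U_{i+1}\cdots U_L\otimes I$ out on the right of the first term and using that it preserves $I_{2^n}\otimes\ket{0^m}$, a vector of norm at most $\epsilon$. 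Once the correct ordering is fixed the rest is a one-line triangle inequality; everything else is routine, and the full write-up is deferred to \Cref{app:supp_proof}.
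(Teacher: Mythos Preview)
Your proposal is correct and follows essentially the same telescoping argument as the paper's proof, which presents it as an induction on $L$ peeling off the rightmost factor $V_L$. Your second hybrid ordering $Y_i = W_1\cdots W_i\,(U_{i+1}\cdots U_L\otimes\ket{0^m})$ is the right one; the identity making your ``factor out on the right'' step precise is $(A\otimes I_{2^m})(I_{2^n}\otimes\ket{0^m}) = (I_{2^n}\otimes\ket{0^m})\,A$ for $A\in\U(2^n)$, which lets you pull $U_{i+1}\cdots U_L$ to the right as a $2^n\times 2^n$ unitary and invoke the hypothesis on $W_i$.
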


Given a bitstring $a\in \{0,1\}^n$, denote by $\mathrm{wt}(a)$ the number of $1$'s in $a$. 
\begin{lemma}[Generating all monomials]\label{lem:all_monomials}
    The monomials generating unitary of degree $n$ given by
    \begin{equation*}
        \ket{x_1,\ldots,x_n}\otimes \bigotimes_{a\in \{0,1\}^n, \mathrm{wt}(a)\geq 2}\ket{ y_a} \mapsto \ket{x_1,\ldots,x_n} \otimes \bigotimes_{a\in \{0,1\}^n, \mathrm{wt}(a)\geq 2}\ket{ y_a \oplus x_1^{a_1}x_2^{a_2}\cdots x_n^{a_n}}, 
    \end{equation*}
    where $x_1,\ldots,x_n,y_a\in \{0,1\}$,
    can be written exactly as a product of $2^n - n - 1$ Toffoli gates. 
\end{lemma}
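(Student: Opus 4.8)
The plan is to exhibit an explicit product of $2^n-n-1$ Toffoli gates and verify that it realizes the stated map. I would begin with the counting identity: the number of $a\in\{0,1\}^n$ with $\mathrm{wt}(a)\ge 2$ is $2^n-\binom{n}{0}-\binom{n}{1}=2^n-n-1$, so the claimed count is exactly one Toffoli per monomial $x_1^{a_1}\cdots x_n^{a_n}$. The observation that makes this possible is that the Toffoli responsible for a degree-$w$ monomial need not build it from scratch: it may take, as one of its two controls, an ancilla register that already holds a degree-$(w-1)$ sub-monomial, so that no working space beyond the registers $\ket{y_a}$ is ever needed.

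Concretely, I would process the monomials in nondecreasing order of $\mathrm{wt}(a)$ (ties broken arbitrarily), contributing one Toffoli for each. For $a$ with $\mathrm{wt}(a)=2$, say $a_i=a_j=1$ with $i<j$, take the Toffoli whose controls are the input qubits carrying $x_i$ and $x_j$ and whose target is $y_a$; this sends $y_a\mapsto y_a\oplus x_ix_j$ and leaves all other registers fixed. For $a$ with $\mathrm{wt}(a)=w\ge 3$, let $j$ be the largest index with $a_j=1$, let $a'$ be $a$ with the $j$-th bit cleared (so $\mathrm{wt}(a')=w-1\ge 2$), and take the Toffoli with controls $y_{a'}$ and the input qubit carrying $x_j$, and target $y_a$. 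The total is $\sum_{w\ge 2}\binom{n}{w}=2^n-n-1$ gates, the $y_a$ are pairwise distinct targets, and no Toffoli ever targets an input qubit.

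For correctness I would argue by induction along this ordering that, at the moment the Toffoli for a weight-$w$ string $a$ ($w\ge 3$) fires, the ancilla $y_{a'}$ holds exactly the sub-monomial $x_1^{a'_1}\cdots x_n^{a'_n}$: since $\mathrm{wt}(a')=w-1<w$, the string $a'$ occurs earlier in the ordering, its unique Toffoli has already fired, and—because each $y_a$ is the target of exactly one Toffoli—$y_{a'}$ is never disturbed afterwards; meanwhile the input qubits, used only as controls, carry $x_1,\dots,x_n$ throughout. Hence $y_a$ receives $x_1^{a'_1}\cdots x_n^{a'_n}\cdot x_j=x_1^{a_1}\cdots x_n^{a_n}$, with the base case $w=2$ being the direct computation above, so the full product implements the monomials generating unitary. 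The only real work here is the scheduling bookkeeping—checking that nondecreasing-weight order makes every weight-$(w-1)$ ancilla ready and uncorrupted before any weight-$w$ Toffoli fires, and that distinct monomials use distinct targets so nothing is overwritten; once this is in place the gate count is immediate.
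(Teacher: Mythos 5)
Your proposal is correct and follows essentially the same recursive construction as the paper: one Toffoli per weight-$\geq 2$ monomial, building each degree-$w$ monomial from an already-computed degree-$(w-1)$ sub-monomial and one input qubit, for a total of $\sum_{w\geq 2}\binom{n}{w}=2^n-n-1$ gates. Your explicit scheduling and induction on the nondecreasing-weight ordering just spells out the bookkeeping the paper leaves implicit.
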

\begin{proof}
    We trivially have all the degree-$1$ monomials $x_1,\ldots,x_n$. 
    The algorithm then generates each of the degree-$2$ monomial using one Toffoli gate. More concretely, to generate $x_{i_1}x_{i_2}$ for some $1\leq i_1<i_2\leq n$, apply a Toffoli gate which is controlled on $\ket{x_{i_1}}$ and $\ket{x_{i_2}}$ and acts on $\ket{y_a}$ where $a \in \{0,1\}^n$ has two $1$'s at positions $i_1$ and $i_2$ and $0$'s elsewhere, i.e.\ $a=e_{i_1} + e_{i_2}$. 
    The algorithm can thus work recursively to generate all monomials. 
    For each integer $i\in [2, n]$, the algorithm can generate all degree-$i$ monomials by applying $\binom{n}{i}$ Toffoli gates controlled on the appropriate degree-$(i-1)$ monomials and degree-$1$ monomials. 
    Overall, the number of Toffoli gates used to generate all degree-$n$ monomials is $\sum_{i=2}^n \binom{n}{i}  = 2^n - n - 1$. 
\end{proof}

\begin{lemma}[T-count for Boolean function oracles, {\cite[Theorem 2]{LKS24}}]
    \label{lem:decomp_bool_oracle}
    Let $r\geq 1$ be an integer and $f:\{0,1\}^n \to \{0,1\}^r$ be an arbitrary Boolean function. Define $U_f$ as the unitary mapping $\ket{x}\ket{y}$ to $\ket{x}\ket{y\oplus f(x)}$ for all $x\in \{0,1\}^n$ and $y\in \{0,1\}^r$. Then $U_f$ admits an exact Clifford+T implementation using $O(\sqrt{r\cdot 2^n})$ T gates and ancillae. 
\end{lemma}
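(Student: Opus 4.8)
The plan is to prove this via a \emph{compute--select--uncompute} table-lookup network, balancing the cost between a unary iteration over one part of the input and a controlled-swap routing network over the other part; this is the select--swap (\textsf{QROAM}) construction underlying \cite{LKS24}. First I would dispose of the regime $r > 2^n$ by splitting the output into $\lceil r/2^n\rceil$ blocks of at most $2^n$ bits, applying the construction to each block while reusing the same ancillae, and summing the costs to get $O(r) = O(\sqrt{r\cdot 2^n})$ T gates with $O(2^n)$ ancillae. So assume $1\le r\le 2^n$ and write $x = (u,v)$ with $u$ the first $a$ bits and $v$ the remaining $b = n-a$ bits, where $a\approx \tfrac12(n+\log_2 r)$; note $b\ge 0$ in this regime, and the choice is made so that $2^a \approx \sqrt{r\cdot 2^n}\approx 2^b r$.

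The circuit I would build has three phases. In the \emph{compute} phase, a unary-iteration circuit ranges over all $2^a$ values of $u$, maintaining $O(a)$ flag ancillae that compute the ANDs of prefixes of $u$, and whenever the flag signals that the current value equals $u$ it XORs the classical bitstring $\big(f(u,v)\big)_{v\in\{0,1\}^b}$ into a work register $W$ of $2^b\cdot r$ qubits; since writing a classical constant conditioned on a flag qubit uses only CNOT/X gates, which are Clifford, this phase costs $O(2^a)$ Toffoli gates and uses $O(2^b r + a)$ qubits. In the \emph{select} phase, a binary tree of $b$ layers of controlled swaps, with layer $j$ controlled on $v_j$, routes the $v$-th $r$-bit block of $W$ to the front; then $r$ CNOTs XOR that block into the output register $\ket{y}$, after which the swap tree is run in reverse. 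Each one-qubit controlled swap is one Toffoli and there are $\sum_j 2^{b-j} r = O(2^b r)$ of them, so this phase costs $O(2^b r)$ Toffoli gates. The \emph{uncompute} phase runs the compute phase backward, restoring $W$ and all flags to $\ket{0}$, at the cost of another $O(2^a)$ Toffoli gates. Altogether the Toffoli count is $O(2^a + 2^b r) = O(\sqrt{r\cdot 2^n})$ and the ancilla count is $O(2^b r + a) = O(\sqrt{r\cdot 2^n})$; replacing each Toffoli by a standard Clifford+T gadget with $O(1)$ T gates and $O(1)$ clean ancillae yields a Clifford+T circuit with $O(\sqrt{r\cdot 2^n})$ T gates and ancillae, and since every operation acts reversibly on computational basis states the net effect is exactly $\ket{x}\ket{y} \mapsto \ket{x}\ket{y\oplus f(x)}$.

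The main point to get right is that the \emph{ancilla} count, not merely the T-count, is $O(\sqrt{r\cdot 2^n})$: this is what forces $W$ to hold only the size-$2^b$ sub-table for the current value of $u$ rather than the entire table, which is precisely why the unary iteration and the swap network must be nested as above, and it is what makes the clean uncomputation of the compute phase essential. A secondary technical point I would need to verify carefully is that unary iteration over $2^a$ values costs only $O(2^a)$ Toffoli gates in amortized total (via the incremental prefix-AND structure, reusing and uncomputing partial ANDs as the index advances) rather than the naive $O(a\cdot 2^a)$, and uses only $O(a)$ ancillae; both facts are standard but should be stated explicitly, together with the verification that the balancing $2^a\approx 2^b r$ is feasible exactly when $r\le 2^n$, which is why the $r>2^n$ case is handled separately by the initial reduction.
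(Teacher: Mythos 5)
Your main construction for the regime $1\le r\le 2^n$ is correct: unary iteration over the first $a\approx\tfrac12(n+\log_2 r)$ input bits to load the relevant $2^b\cdot r$-bit sub-table, a controlled-swap tree over the remaining $b=n-a$ bits, and clean uncomputation, for $O(2^a+2^b r)=O(\sqrt{r\cdot 2^n})$ Toffolis and ancillae. This is essentially the select--swap lookup underlying the cited \cite{LKS24}, and the two facts you defer (linear-total-cost unary iteration with $O(a)$ flag qubits, and constant-T Toffoli gadgets) are indeed standard. The paper itself only cites this lemma, so the comparison is against the source; on the main regime you match it.

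The genuine gap is your treatment of $r>2^n$. Splitting the output into $\lceil r/2^n\rceil$ blocks of at most $2^n$ bits costs $\Theta(2^n)$ T gates per block, hence $\Theta(r)$ in total, and $O(r)$ is \emph{not} $O(\sqrt{r\cdot 2^n})$ in this regime: for $r>2^n$ one has $r=\sqrt{r}\cdot\sqrt{r}>\sqrt{r}\cdot 2^{n/2}=\sqrt{r\cdot 2^n}$ (for instance $r=4^n$ gives $r=2^{2n}$ against $\sqrt{r\cdot 2^n}=2^{3n/2}$), so the inequality you need points the wrong way. The same problem afflicts the ancilla count of the degenerate $a=n$ lookup, whose work register already has $r>\sqrt{r\cdot 2^n}$ qubits. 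The repair is to drop the block-splitting and exploit that all $r$ output bits are functions of the \emph{same} $n$ input bits: compute all $2^n$ monomials $x_1^{a_1}\cdots x_n^{a_n}$ once into ancillae using $O(2^n)$ Toffolis (exactly \Cref{lem:all_monomials}); each coordinate $f_j$ is an $\F_2$-linear combination of these monomials and is therefore written into $y_j$ by CNOTs alone, which are Clifford; then uncompute the monomials. This costs $O(2^n)$ T gates and ancillae independently of $r$, which is $O(\sqrt{r\cdot 2^n})$ precisely when $r\ge 2^n$. (This monomial route is the one sketched in \cite[Remark 2.2]{GKW24} and mirrors the polynomial factoring in \Cref{lem:special_k_control}.) The regime is not vacuous for this paper: in the application inside \Cref{lem:special_k_control} one takes $r=m(2L)2^k$ over a domain of size $2^{n-k}$, and $r$ exceeds the domain size when $\log(1/\epsilon)$ is large, so your version of the lemma would weaken the final T-count there.
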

We also refer readers to \cite[Remark 2.2]{GKW24} for a nice proof sketch of \Cref{lem:decomp_bool_oracle}.

\begin{lemma}[Single-qubit Clifford+T approximation, \cite{RS16}]\label{lem:single_qubit_H_T}
    For any $\epsilon>0$ and any $U\in \SU(2)$, there exists $\widetilde{U}\in \SU(2)$ such that $\norm*{U-\widetilde{U}}\leq \epsilon$ and $\widetilde{U}$ is a product of $O(\log(1/\epsilon))$ Hadamard and T gates. 
\end{lemma}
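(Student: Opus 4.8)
The plan is to reduce an arbitrary element of $\SU(2)$ to $z$-axis rotations and then solve a Diophantine approximation problem over the ring $\mathbb{Z}[\omega]$, $\omega = e^{\ii\pi/4}$, following the number-theoretic synthesis of Ross and Selinger. First, writing $U$ in Euler-angle form $U = R_z(\alpha)\,R_x(\beta)\,R_z(\gamma)$ and using $R_x(\beta) = H R_z(\beta) H$, it suffices to $(\epsilon/3)$-approximate a single $z$-rotation $R_z(\theta) = \diag(e^{-\ii\theta/2}, e^{\ii\theta/2})$ by a product of $O(\log(1/\epsilon))$ Hadamard and T gates: composing three such approximations costs only a factor $3$ in both error and T-count, Clifford gates are themselves products of $H$ and $T$ (e.g.\ $S=T^2$, $X = HT^4H$), and global phases $\omega^j I$ are realizable with $O(1)$ T gates (e.g.\ $\omega I = (XTX)\,T$), which takes care of the $\det = 1$ normalization needed for $\widetilde U \in \SU(2)$.

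Second, I would invoke the exact-synthesis characterization of Kliuchnikov--Maslov--Mosca (equivalently Giles--Selinger): a single-qubit unitary admits an exact Clifford+T implementation if and only if all of its entries lie in $\mathbb{Z}[1/\sqrt2, \ii]$, and its T-count equals, up to an additive constant, the least integer $k$ for which every entry can be written as an element of $\mathbb{Z}[\omega]$ divided by $\sqrt2^{\,k}$. Hence the goal becomes: find $k = O(\log(1/\epsilon))$ together with $a, b \in \mathbb{Z}[\omega]$ satisfying $|a|^2 + |b|^2 = 2^k$ and $\bigl|\,a/\sqrt2^{\,k} - e^{-\ii\theta/2}\,\bigr| \le \epsilon$; completing $a$ (placed in the top-left entry) to a unitary with $b$ off-diagonal then yields a circuit that $\epsilon$-approximates $R_z(\theta)$ with $O(\log(1/\epsilon))$ T gates.

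Third---the heart of the argument---I would fix $k$ proportional to $\log_2(1/\epsilon)$ and enumerate candidates $a \in \mathbb{Z}[\omega]$ whose rescaling $a/\sqrt2^{\,k}$ lands in a thin annular sector of radial and angular width $\Theta(\epsilon)$ about the target point $e^{-\ii\theta/2}$, subject also to the totally-positive constraint that the Galois conjugate $(a\bar a)^{\bullet}$ (under $\sqrt2 \mapsto -\sqrt2$) stay below $2^k$; there are $\mathrm{poly}(1/\epsilon)$ such candidates since $\mathbb{Z}[\omega]$ is sufficiently dense in $\mathbb{C}$. For each candidate the only obstruction to completing to a unitary is whether $\xi := 2^k - |a|^2 \in \mathbb{Z}[\sqrt2]$ can be written as $b\bar b$ with $b \in \mathbb{Z}[\omega]$; by the Hasse norm theorem for the quadratic extension $\mathbb{Q}(\zeta_8)/\mathbb{Q}(\sqrt2)$ this holds exactly when $\xi$ is totally positive (arranged above) and every prime of $\mathbb{Z}[\sqrt2]$ inert in $\mathbb{Z}[\omega]$ divides $\xi$ to an even power. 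I would then argue that among the $\mathrm{poly}(1/\epsilon)$ candidates, an inverse-polynomial fraction have $\xi$ equal, up to a unit and an easily-removed power of $\sqrt2$, to a prime times a square---so the norm condition holds---and that for any such $\xi$ one recovers $b$ in polynomial time by extracting a square root of $-1$ modulo the prime factor(s) and running a Euclidean/Gaussian-reduction step in $\mathbb{Z}[\omega]$. The main obstacle is precisely this density claim: guaranteeing that a good approximant exists with $k = O(\log(1/\epsilon))$ requires a number-theoretic input---either a mild unproven hypothesis on the density of suitable primes, which yields the sharp constant $3\log_2(1/\epsilon)$, or an unconditional sieve bound that degrades the constant but still gives the $O(\log(1/\epsilon))$ scaling claimed here. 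For the full argument I refer to \cite{RS16}.
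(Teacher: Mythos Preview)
The paper does not prove this lemma at all; it is stated as a preliminary and attributed to \cite{RS16}. Your proposal is a faithful high-level sketch of precisely the Ross--Selinger argument from that reference (Euler-angle reduction to $z$-rotations, the Kliuchnikov--Maslov--Mosca / Giles--Selinger exact-synthesis characterization over $\mathbb{Z}[\omega]$, and the norm-equation solvability analysis that controls the T-count), so there is nothing further to compare.
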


\section{Cosine-sine decomposition}

In this section, we recall the cosine-sine (CS) decomposition, and explain how it can be used recursively to factorize any unitary into multi-controlled single-qubit unitaries. This will serve as the structural backbone for our synthesis results.

We begin by clarifying a piece of terminology that will be used frequently throughout the paper.
\begin{definition}[Multi-controlled unitaries]
Let $k$ be a positive integer smaller than $n$. 
We call $U\in \U(2^n)$ an \emph{$(n-k)$-fold controlled $k$-qubit unitary} if, up to a permutation of qubits,
\begin{equation*}
U = \sum_{x\in \{0,1\}^{n-k}} \ketbra{x}\otimes V_x, \qquad \text{where } V_x\in \U(2^k).
\end{equation*}
When the number of control qubits is clear from context, we simply refer to $U$ as a \emph{multi-controlled $k$-qubit unitary}.
In particular, if $U\in \U(2^n)$ is described this way, the number of control qubits is understood to be $n-k$.
\end{definition}

The form $\sum_x \ketbra{x}\otimes V_x$ is block-diagonal and naturally connects to the cosine-sine (CS) decomposition \cite{PW94}.
In this paper, we will only use a special case as summarized in \Cref{thm:CSD}. 
We refer interested readers to \cite[Section 2.1]{TT23} for detailed illustration and a proof of the more general case. 

\begin{theorem}[Special case of the CS decomposition]\label{thm:CSD}
For any $U\in \U(2^n)$, there exist $V_1,V_2,W_1,W_2\in \U(2^{n-1})$ and angles $\theta_1,\ldots,\theta_{2^{n-1}}\in [0,\pi/2]$ such that
\begin{equation}\label{eq:CSD}
        U = \underbrace{\begin{pmatrix}
            V_1 & \\
            & V_2
        \end{pmatrix}}_{\coloneqq V}\cdot \underbrace{\begin{pmatrix}
            C & S \\
            S & -C
        \end{pmatrix}}_{\coloneqq D} \cdot \underbrace{\begin{pmatrix}
            W_1 & \\
            & W_2
        \end{pmatrix}}_{\coloneqq W},
    \end{equation}
where $C = \diag(\cos\theta_1,\ldots,\cos\theta_{2^{n-1}})$ and $S = \diag(\sin\theta_1,\ldots,\sin\theta_{2^{n-1}})$.
\end{theorem}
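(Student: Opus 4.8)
The plan is to reduce the statement to the general cosine--sine (CS) decomposition and reconcile the sign conventions; I will also sketch a self-contained route through the singular value decomposition (SVD) in case one prefers not to black-box the general result. For the first route, partition $U\in\U(2^n)$ into four $2^{n-1}\times 2^{n-1}$ blocks and invoke the general CS decomposition (\cite[Section 2.1]{TT23}): there exist $V_1,V_2,\widehat W_1,\widehat W_2\in\U(2^{n-1})$ and $\theta_1,\dots,\theta_{2^{n-1}}\in[0,\pi/2]$ with
\begin{equation*}
U=\begin{pmatrix}V_1 & \\ & V_2\end{pmatrix}\begin{pmatrix}C & -S \\ S & C\end{pmatrix}\begin{pmatrix}\widehat W_1 & \\ & \widehat W_2\end{pmatrix},\qquad C=\diag(\cos\theta_i),\quad S=\diag(\sin\theta_i),
\end{equation*}
the number of angles being the block size $2^{n-1}$. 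Since $\begin{pmatrix}C & S\\ S & -C\end{pmatrix}=\begin{pmatrix}C & -S\\ S & C\end{pmatrix}\begin{pmatrix}I & 0\\ 0 & -I\end{pmatrix}$, we absorb the sign flip into the right-hand diagonal block: with $W_1\coloneqq\widehat W_1$ and $W_2\coloneqq-\widehat W_2$ (still unitary) this is exactly \eqref{eq:CSD}. If the convention in the cited reference places the minus sign on the lower-left entry instead, absorb $\diag(I,-I)$ into $V$ on the left; the argument is symmetric.

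For a self-contained argument, write $U$ in $2\times 2$ block form with $2^{n-1}\times 2^{n-1}$ blocks $A,B$ (top row) and $C',D'$ (bottom row), and take an SVD $A=V_1 C W_1$ with $C$ diagonal and positive semidefinite. Orthonormality of the columns of $U$ gives $A^\dagger A+C'^\dagger C'=I$, so $C\preceq I$ and we may write $C=\diag(\cos\theta_i)$ with $\theta_i\in[0,\pi/2]$; set $S=\sqrt{I-C^2}=\diag(\sin\theta_i)$. The same column-orthonormality yields $(C'W_1^\dagger)^\dagger(C'W_1^\dagger)=I-C^2=S^2$, and row-orthonormality ($AA^\dagger+BB^\dagger=I$) yields $(V_1^\dagger B)(V_1^\dagger B)^\dagger=S^2$; hence there are unitaries $V_2,W_2$ with $C'W_1^\dagger=V_2 S$ and $V_1^\dagger B=S W_2$, where on the coordinates at which $S$ has a zero diagonal entry the corresponding columns of $V_2$ (resp.\ rows of $W_2$) are chosen to complete an orthonormal basis. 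Conjugating, $\diag(V_1,V_2)^\dagger\,U\,\diag(W_1,W_2)^\dagger=\left(\begin{smallmatrix}C & S\\ S & X\end{smallmatrix}\right)$ with $X=V_2^\dagger D' W_2^\dagger$; unitarity of this matrix forces $X=-C$ on every coordinate with $\sin\theta_i>0$, while on the remaining coordinates (where $\cos\theta_i=1$) $X$ restricts to an arbitrary unitary block. On exactly those coordinates the relevant parts of $V_2,W_2$ were left free, so re-choosing them makes $X=-C$, and we are done.

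I expect the degeneracy bookkeeping in the SVD route to be the main obstacle: when some $\sin\theta_i=0$ (or $\cos\theta_i=0$) the outer unitaries are genuinely non-unique, and one has to track precisely which rows and columns of $V_2,W_2$ are still at our disposal in order to normalize the residual block $X$ to exactly $-C$ (and to check that re-choosing them does not disturb the $(1,2)$ and $(2,1)$ blocks, which holds because $S$ vanishes on those coordinates). The reduction route sidesteps this entirely; since the surrounding text already cites \cite{TT23} for the general CS decomposition, in the paper I would present that route, supplying only the one-line sign reconciliation above together with the remark that the number of angles equals the block size $2^{n-1}$.
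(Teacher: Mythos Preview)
Your proposal is correct. The paper itself does not prove \Cref{thm:CSD}: it states the result and refers the reader to \cite[Section~2.1]{TT23} for the general case, so your first route---invoke the standard CS decomposition and absorb the sign via $\bigl(\begin{smallmatrix}C & S\\ S & -C\end{smallmatrix}\bigr)=\bigl(\begin{smallmatrix}C & -S\\ S & C\end{smallmatrix}\bigr)\bigl(\begin{smallmatrix}I & 0\\ 0 & -I\end{smallmatrix}\bigr)$---is exactly in the spirit of the paper's treatment, only with the sign reconciliation made explicit. Your self-contained SVD derivation goes beyond what the paper supplies and is sound; the degeneracy bookkeeping you flag (re-choosing the free columns of $V_2$ and rows of $W_2$ on $\ker S$ to normalize the residual block to $-C$, which leaves the off-diagonal $S$ blocks untouched because $S$ vanishes there) is handled correctly.
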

We remark that
\begin{itemize}
    \item $V,W\in \U(2^n)$ are block-diagonal and correspond to $1$-fold controlled $(n-1)$-qubit unitaries. They are both controlled by the first qubit and act on the remaining $n-1$ qubits; 
    \item The middle block $D\in \U(2^n)$ is a multi-controlled single-qubit unitary controlled by the last $n-1$ qubits and acts on the first qubit.
\end{itemize} 
By applying the CS decomposition recursively to $V$ and $W$, one obtains a factorization of any $n$-qubit unitary into $2^n-1$ multi-controlled single-qubit unitaries. The order in which the target qubits appear has a specific combinatorial structure.
To better describe this order, we need the following notation. 

\begin{definition}[Position of the rightmost $1$]
For each $i\in [2^n-1]$, let $t_n(i)\in [n]$ denote the position of the rightmost $1$ in the binary representation of $i$ in $n$ bits, where the most significant bit has index $1$ and the least significant bit has index $n$. 
\end{definition}
For example, $t_n(1)=n$ since the binary representation of $1$ is $0^{n-1}1$, and $t_n(2^n-2)=n-1$ since the binary representation of $2^n-2$ is $1^{n-1}0$.
When $n=3$, 
\begin{equation*}
    i:1,2,3,4,5,6,7 \quad \Rightarrow \quad t_3(i):3,2,3,1,3,2,3. 
\end{equation*}

We will use the following simple proposition in the proof of \Cref{thm:recursive_CSD}. 
\begin{proposition}\label{prop:rightmost_one}
    For each $i\in [2^{n-1}-1]$, we have that $t_n(i) = t_n(i+2^{n-1}) = t_{n-1}(i) + 1\geq 2$. 
\end{proposition}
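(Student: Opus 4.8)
The plan is to prove the identity purely by unpacking the binary representations and tracking the single index shift between the $n$-bit and $(n-1)$-bit encodings. Fix $i \in [2^{n-1}-1]$. Since $i < 2^{n-1}$, its $n$-bit representation has a $0$ in the most significant position, so it reads $0\,b_1 b_2 \cdots b_{n-1}$, where $b_1 b_2 \cdots b_{n-1}$ is exactly the $(n-1)$-bit representation of $i$ (with $b_1$ the most significant). Because $i \geq 1$, the string $b_1 \cdots b_{n-1}$ contains at least one $1$; let $j = t_{n-1}(i) \in [n-1]$ be the position of its rightmost $1$ under the $(n-1)$-bit indexing convention.

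First I would handle $t_n(i)$. In the $n$-bit string $0\,b_1 \cdots b_{n-1}$, the bit that was at position $j$ in the $(n-1)$-bit string now sits at position $j+1$, and the newly prepended bit at position $1$ is a $0$, so it does not affect the rightmost $1$. Hence the rightmost $1$ of the $n$-bit string is at position $j+1$, i.e.\ $t_n(i) = t_{n-1}(i) + 1$. Since $j \geq 1$, this is $\geq 2$.

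Next I would handle $t_n(i + 2^{n-1})$. Adding $2^{n-1}$ flips the most significant bit from $0$ to $1$ and leaves the lower $n-1$ bits unchanged, so the $n$-bit representation of $i + 2^{n-1}$ is $1\,b_1 \cdots b_{n-1}$. The only difference from the previous case is that position $1$ now holds a $1$ rather than a $0$; but since $b_1 \cdots b_{n-1}$ already contains a $1$ (as $i \geq 1$), the rightmost $1$ is still the one at position $j+1$, not the leading bit at position $1$. Therefore $t_n(i + 2^{n-1}) = j + 1 = t_n(i) = t_{n-1}(i) + 1 \geq 2$, which is the claim.

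I do not expect any genuine obstacle here; the statement is elementary. The only point requiring care is the indexing convention — that the most significant bit carries index $1$, so prepending a bit shifts every position by $+1$ — and the observation that the key hypothesis $i \geq 1$ is exactly what guarantees the prepended leading $1$ in the second case is never the rightmost $1$.
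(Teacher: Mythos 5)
Your proof is correct and is exactly the elementary bit-indexing argument the paper implicitly relies on (the paper states Proposition~\ref{prop:rightmost_one} without proof, calling it simple). You correctly isolate the one point that needs care—that $i\geq 1$ guarantees the lower $n-1$ bits contain a $1$, so the prepended leading bit (whether $0$ or $1$) never becomes the rightmost $1$—so nothing is missing.
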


\begin{theorem}[Recursive CS decomposition]\label{thm:recursive_CSD}
For any $U\in \U(2^n)$, there exist $U_1,\ldots,U_{2^n-1}\in \U(2^n)$ such that
$U = U_1U_2\cdots U_{2^n-1}$, where each $U_i$ is a multi-controlled single-qubit unitary targeting the qubit indexed by $t_n(i)$.
\end{theorem}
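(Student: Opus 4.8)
The plan is to prove the statement by induction on $n$, with the base case $n=1$ being immediate: the CS decomposition applied to $U\in\U(2^1)$ is overkill, but in any case a $1$-qubit unitary is trivially a multi-controlled single-qubit unitary targeting qubit $t_1(1)=1$, giving the factorization $U=U_1$. For the inductive step, assume the result holds for $\U(2^{n-1})$, and let $U\in\U(2^n)$. First I would apply \Cref{thm:CSD} to write $U=VDW$, where $V=\mathrm{diag}(V_1,V_2)$ and $W=\mathrm{diag}(W_1,W_2)$ are $1$-fold controlled $(n-1)$-qubit unitaries controlled by the first qubit, and $D$ is a multi-controlled single-qubit unitary targeting the first qubit.

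Next I would handle $V$ and $W$ separately using the induction hypothesis. Applying the hypothesis to each of $V_1,V_2\in\U(2^{n-1})$ yields factorizations into $2^{n-1}-1$ multi-controlled single-qubit unitaries on the last $n-1$ qubits, with the $i$-th factor targeting qubit $t_{n-1}(i)$ of that $(n-1)$-qubit register. The key observation is that if $V_1=A_1\cdots A_{2^{n-1}-1}$ and $V_2=B_1\cdots B_{2^{n-1}-1}$ with $A_i,B_i$ targeting qubit $t_{n-1}(i)$, then $\mathrm{diag}(V_1,V_2)=\prod_{i=1}^{2^{n-1}-1}\mathrm{diag}(A_i,B_i)$, because block-diagonal matrices with the same block structure multiply blockwise. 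Each $\mathrm{diag}(A_i,B_i)$ is then a multi-controlled single-qubit unitary on the full $n$-qubit register: it targets qubit $t_{n-1}(i)+1$ (the same target as $A_i$ and $B_i$, now shifted by one because we prepended a qubit), controlled by all other $n-1$ qubits including the first. By \Cref{prop:rightmost_one}, $t_{n-1}(i)+1 = t_n(i+2^{n-1})$, and also $= t_n(i)$.

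So I would set, for $i\in[2^{n-1}-1]$, the factor $U_{i+2^{n-1}}$ coming from $W=\mathrm{diag}(W_1,W_2)$ (targeting qubit $t_n(i+2^{n-1})$), the middle factor $U_{2^{n-1}}\coloneqq D$ (targeting qubit $1=t_n(2^{n-1})$, since $2^{n-1}$ in binary is $1\,0^{n-1}$), and the factor $U_i$ coming from $V=\mathrm{diag}(V_1,V_2)$ (targeting qubit $t_n(i)$). Concatenating gives $U = VDW = U_1\cdots U_{2^{n-1}-1}\cdot U_{2^{n-1}}\cdot U_{2^{n-1}+1}\cdots U_{2^n-1}$, with each $U_i$ targeting qubit $t_n(i)$, as required. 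I expect the main (minor) obstacle to be purely bookkeeping: getting the index alignment exactly right so that the factors from $V$ land in positions $1,\dots,2^{n-1}-1$ and those from $W$ in positions $2^{n-1}+1,\dots,2^n-1$, and confirming via \Cref{prop:rightmost_one} that the shifted targets $t_{n-1}(i)+1$ match both $t_n(i)$ and $t_n(i+2^{n-1})$ — the proposition is stated precisely to discharge this, so no real difficulty remains. One should also double-check the degenerate structure at $i=2^{n-1}$ (the only index whose binary representation has its rightmost $1$ in position $1$), which corresponds exactly to the middle factor $D$ targeting the first qubit.
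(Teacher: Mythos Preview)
Your proposal is correct and follows essentially the same approach as the paper's proof: induction on $n$, one application of the CS decomposition \Cref{thm:CSD} to split $U=VDW$, the inductive hypothesis applied to the diagonal blocks of $V$ and $W$, blockwise multiplication to recombine, and \Cref{prop:rightmost_one} to verify the target-qubit indices line up. The only cosmetic difference is that the paper takes $n=2$ as the base case (where \Cref{thm:CSD} directly gives the three factors), whereas you start at $n=1$; both work.
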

\begin{proof}
We will prove the claim by induction on $n$.

The base case $n=2$ follows directly from \Cref{thm:CSD}: in \Cref{eq:CSD}, we have $U_1=V$ and $U_3=W$, both targeting at the second qubit since $t_2(1)=t_2(3)=2$, and $U_2=D$, which targets the first qubit since $t_2(2)=1$.

For the induction step, assume the claim holds for $n-1$ qubits.
Given $U\in \U(2^n)$, its CS decomposition has the form $U= VDW$ as in \Cref{eq:CSD}.
By the induction hypothesis, each $V_j,W_j\in \U(2^{n-1})$ can be decomposed into multi-controlled single-qubit unitaries: for $j=1,2$, 
\begin{equation*}
    V_j = V_{j,1} V_{j,2}\cdots V_{j,2^{n-1}-1}, \qquad
    W_j = W_{j,1} W_{j,2}\cdots W_{j,2^{n-1}-1}. 
\end{equation*}
Here each $V_{j,i},W_{j,i}$ is an $(n-1)$-qubit unitary acting on qubit $t_{n-1}(i) \in [n-1]$ where $i\in [2^{n-1}-1]$.

For each $i\in [2^{n-1}-1]$, define
\begin{equation*}
    U_i = \begin{pmatrix}
        V_{1,i} & \\
        & V_{2,i}
    \end{pmatrix}
    \quad\text{and}
    \quad
    U_{i+2^{n-1}} = \begin{pmatrix}
        W_{1,i} & \\
        & W_{2,i}
    \end{pmatrix}. 
\end{equation*}
Each $U_i$ (or $U_{i+2^{n-1}}$) inherits the control structure from $V_{j,i}$ (or $W_{j,i}$), with the first qubit acting as an additional control. 
Then by \Cref{prop:rightmost_one}, each $U_i$ targets the qubit indexed by $t_{n-1}(i) + 1 = t_n(i)$ and is controlled by the remaining $n-1$ qubits. 
Similarly, each $U_{i + 2^{n-1}}$ targets the qubit indexed by $t_{n-1}(i) + 1 = t_n(i + 2^{n-1})$ and is controlled by the rest of the qubits. 

Finally, let us set $U_{2^{n-1}}=D$, which is a multi-controlled single-qubit unitary acting on the first qubit and indeed $t_n(2^{n-1}) = 1$.
This completes the induction.
\end{proof}

\subsection{A naïve implementation using \texorpdfstring{$O(2^{3n/2} \cdot n^{1/2})$}{O(2³ⁿᐟ² · n¹ᐟ²)} T gates}\label{subsec:naive}

It was shown in \cite[Theorem 1.2]{GKW24} that any diagonal unitary $D\in \U(2^n)$ can be $\epsilon$-approximated by a Clifford+T circuit using $O(\sqrt{2^n\cdot \log(1/\epsilon)} + \log(1/\epsilon))$ T gates and ancillae. 
In fact, if one takes a closer look at the proof of \cite[Theorem 1.2]{GKW24}, it directly works for any unitary in the form of
\begin{equation}\label{eq:multi_control_SU2}
    \sum_{x\in \{0,1\}^{n-1}} \ketbra{x} \otimes R_x, \quad \text{as long as}\quad R_x\in \SU(2). 
\end{equation}
Hence, one can relax the theorem to any multi-controlled single-qubit unitary $U$, not just the diagonal ones.
This is because $U$ can be written as a product of a diagonal gate and a gate in the form of \Cref{eq:multi_control_SU2}. 
We summarize this generalization as below. 
\begin{corollary}[T-count for multi-controlled single-qubit unitaries]\label{thm:GKW_T_count}
    For any $\epsilon>0$ and any multi-controlled single-qubit unitary $U\in \U(2^n)$, there exists $\widetilde{U}\in \U(2^{n+1})$ such that $\widetilde{U}$ implements $U$ to error $\epsilon$ and $\widetilde{U}$ admits an exact Clifford+T implementation using $O(\sqrt{2^n \cdot \log(1/\epsilon)} + \log(1/\epsilon))$ T gates and ancillae. 
\end{corollary}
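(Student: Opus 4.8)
The plan is to prove \Cref{thm:GKW_T_count} by reducing it, via the composition bound \Cref{lem:telescope_operator_norm_ineq}, to two applications of (a mild generalization of) the diagonal-unitary synthesis algorithm of \cite{GKW24}. The first and main ingredient is to make that generalization precise: I would show that the argument behind \cite[Theorem 1.2]{GKW24} synthesizes any $U' = \sum_{x\in\{0,1\}^{n-1}}\ketbra{x}\otimes R_x$ with $R_x\in\SU(2)$ within the same asymptotic cost. By \Cref{lem:single_qubit_H_T} I would pick, for each $x$, a word $\widetilde R_x\in\SU(2)$ over $\{H,T\}$ of length at most $M = O(\log(1/\epsilon))$ with $\norm{R_x-\widetilde R_x}\le\epsilon/2$, and pad every such word with identity letters up to the common length $M$. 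A fixed constant-size encoding then records the $i$-th letter of $\widetilde R_x$ as $f_i(x)$, yielding a Boolean function $f = (f_1,\dots,f_M)\colon\{0,1\}^{n-1}\to\{0,1\}^{O(M)}$. The synthesizing circuit (i) writes $f(x)$ into fresh ancillae using the oracle $U_f$, which by \Cref{lem:decomp_bool_oracle} costs $O(\sqrt{M\cdot 2^{n-1}}) = O(\sqrt{2^n\log(1/\epsilon)})$ T gates and ancillae; (ii) for $i = 1,\dots,M$, applies the $i$-th letter to the target qubit, controlled on the bits of $f_i(x)$, each controlled-$H$ or controlled-$T$ costing $O(1)$ T gates, hence $O(M) = O(\log(1/\epsilon))$ in total; (iii) uncomputes $f(x)$ with $U_f^\dagger$. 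Since the blocks $R_x$ act on mutually orthogonal subspaces, the resulting unitary $\widetilde{U'}$ satisfies $\norm{U'-\widetilde{U'}} = \max_x\norm{R_x-\widetilde R_x}\le\epsilon/2$, and it admits an exact Clifford+T implementation by construction.

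Given this, I would reduce the general case. Write $U = \sum_x\ketbra{x}\otimes V_x$ with $V_x\in\U(2)$, choose $\phi_x$ with $e^{2\ii\phi_x} = \det V_x$, and set $R_x = e^{-\ii\phi_x}V_x\in\SU(2)$, so that $V_x = e^{\ii\phi_x}R_x$. Adjoin one clean qubit in state $\ket{0}$ (this is the extra qubit in the statement of \Cref{thm:GKW_T_count}) and note, writing $R_z(\theta) = \diag(e^{-\ii\theta/2}, e^{\ii\theta/2})\in\SU(2)$, that the $(n+1)$-qubit unitary $\bigl(\sum_x\ketbra{x}\otimes R_x\otimes I_2\bigr)\bigl(\sum_x\ketbra{x}\otimes I_2\otimes R_z(-2\phi_x)\bigr)$, restricted to the clean qubit being $\ket{0}$, equals $U\otimes\ket{0}$, because $R_z(-2\phi_x)\ket{0} = e^{\ii\phi_x}\ket{0}$. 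Both factors are $\SU(2)$-valued multi-controlled single-qubit unitaries on $n+1$ qubits, so the generalization above produces $(n+1)$-qubit unitaries approximating each to error $\epsilon/2$, using $O(\sqrt{2^n\log(1/\epsilon)}+\log(1/\epsilon))$ T gates and ancillae each, each admitting an exact Clifford+T implementation. Let $\widetilde U\in\U(2^{n+1})$ be the product of these two approximations. By \Cref{lem:telescope_operator_norm_ineq} (with $L=2$, followed by restricting the clean qubit to $\ket{0}$), $\widetilde U$ implements $U$ to error $\epsilon$ in the sense of \Cref{def:cliffordT_approx}; concatenating the two exact Clifford+T implementations and reusing one reset ancilla register realizes $\widetilde U$ exactly, and summing the two budgets keeps both counts at $O(\sqrt{2^n\log(1/\epsilon)}+\log(1/\epsilon))$.

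The one step I expect to require real care is the generalization in the first paragraph: one has to revisit the proof of \cite[Theorem 1.2]{GKW24} and confirm that diagonality is used nowhere beyond the fact that each block is a single-qubit unitary admitting a short Clifford+T word --- equivalently, check in full that the oracle-plus-controlled-gates construction above realizes $\widetilde{U'}$ with the stated resources. A related but more mundane point is the bookkeeping needed so that all the single-qubit approximations share one length $M = O(\log(1/\epsilon))$ and one uniform constant-size per-letter description, so that a single Boolean oracle with output length $O(\log(1/\epsilon))$ suffices. After that, the determinant factorization is elementary and the error accounting is an immediate application of \Cref{lem:telescope_operator_norm_ineq}.
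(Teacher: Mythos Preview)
Your proposal is correct and follows essentially the same route as the paper: factor each $V_x = e^{\ii\phi_x}R_x$ with $R_x\in\SU(2)$, invoke the proof of \cite[Theorem~1.2]{GKW24} for both the $\SU(2)$-valued block and the phase block, and combine via \Cref{lem:telescope_operator_norm_ineq}. The only cosmetic difference is that you spell out the $\SU(2)$ generalization and the ancilla-$R_z$ phase trick explicitly, whereas the paper simply cites \cite{GKW24} for the two pieces (the same $R_z$/ancilla trick reappears verbatim in the paper's proof of \Cref{lem:generalized_control_unitary}).
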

\begin{proof}
    Any multi-controlled single-qubit unitary $U\in \U(2^n)$, up to a permutation of qubits, can be written as,
    \begin{align*}
        U &= \sum_{x\in \{0,1\}^{n-1}} \ketbra{x} \otimes V_x  \tag{each $V_x\in \U(2)$}\\
        &= \sum_{x\in \{0,1\}^{n-1}} \ketbra{x} \otimes (e^{\ii \theta_x} \cdot R_x) \tag{each $R_x\in \SU(2)$, $\theta_x\in [0, 2\pi)$}\\
        &= \Big(\underbrace{\sum_{x\in \{0,1\}^{n-1}} \ketbra{x} \otimes (e^{\ii \theta_x}\cdot I_2)}_{\coloneqq D} \Big)\cdot \Big(\underbrace{\sum_{x\in \{0,1\}^{n-1}} \ketbra{x} \otimes R_x}_{\coloneqq R}\Big). 
    \end{align*}
    It follows from the proof of \cite[Theorem 1.2]{GKW24} that 
    \begin{itemize}
        \item there exists $\widetilde{R} \in \SU(2^n)$ such that $ \widetilde{R}$ implements $R$ to error $\epsilon/2$, i.e.\
        $\norm*{R - \widetilde{R}}\leq \epsilon/2$.  
        \item there exists $\widetilde{D}\in \SU(2^{n+1})$ such that $\widetilde{D}$ implements $D$ to error $\epsilon/2$. 
        \item Both $\widetilde{R}$ and $\widetilde{D}$ admit exact Clifford+T implementations using $O(\sqrt{2^n \cdot \log(1/\epsilon)} + \log(1/\epsilon))$ T gates and ancillae.
    \end{itemize}
    By \Cref{lem:telescope_operator_norm_ineq}, 
    we know that $\widetilde{U}=\widetilde{D}\cdot (\widetilde{R} \otimes I_2)$ implements $U = DR$ to error $\epsilon$, which completes the proof. 
\end{proof}

Following from \Cref{thm:recursive_CSD}, we know that any $U\in \U(2^n)$ can be written as a product of $2^n-1$ multi-controlled single-qubit unitaries, i.e.\ $U = U_1U_2\cdots U_{2^n-1}$. 
By \Cref{thm:GKW_T_count}, for each $i\in [2^n-1]$, there exists $\widetilde{U}_i \in \U(2^{n+1})$ such that $\widetilde{U}_i$ implements $U_i$ to error $\epsilon \cdot 2^{-n}$ and $\widetilde{U}_i$ admits an exact Clifford+T implementation using 
\begin{equation*}
    O(\sqrt{2^{n} \cdot \log(1/(\epsilon \cdot 2^{-n}))} + \log(1/(\epsilon \cdot 2^{-n}))) = O\!\left(\sqrt{2^{n}(n+ \log(1/\epsilon))} + \log(1/\epsilon) \right)
\end{equation*}
T gates and ancillae. 
So naively, let $\widetilde{U} = \widetilde{U}_1 \widetilde{U}_1 \cdots \widetilde{U}_{2^n - 1} \in \U(2^{n+1})$. 
Then by \Cref{lem:telescope_operator_norm_ineq}, $\widetilde{U}$ implements $U$ to error $\epsilon$ and $\widetilde{U}$ admits an exact Clifford+T implementation with
\begin{equation*}
    \text{T-count: }
    O\!\left(2^n \left(  \sqrt{2^{n}(n+ \log(1/\epsilon))} + \log(1/\epsilon) \right)\right), \qquad \text{ancilla-count}: O\!\left(\sqrt{2^{n}(n+ \log(1/\epsilon))} + \log(1/\epsilon) \right). 
\end{equation*}
Thus, our naïve recursive CS-based synthesis achieves a T-count of $O(2^{3n/2}\cdot n^{1/2})$, improving slightly over the previous best $O(2^{3n/2}\cdot n)$ scaling due to \cite{LKS24} while relying on an arguably simpler analysis.

\section{Lower the T-count}

The key observation that lowers the T-count is that many of the controlled unitaries in the recursive CS decomposition can be grouped together. 

Throughout this section, we will write $t_n(i)$ simply as $t(i)$, since the subscript is always $n$. 
Let $k \in [n-1]$ be a parameter that we will optimize later. 
To begin with, note that each of $U_1,\ldots, U_{2^k-1}$ acts on one of the last $k$ qubits, since $t(1),\ldots, t(2^k-1)\in \{n-k+1, \ldots, n\}$. 
Their product $U_1\cdots U_{2^k-1}$ is thus a unitary acting on qubits in $[n]\setminus [n-k]$ and controlled by qubits in $[n-k]$, i.e.\
\begin{equation*}
    W_0 \coloneqq U_1\cdots U_{2^k-1} = \sum_{x\in \{0,1\}^{n-k} } \ketbra{x} \otimes V_{0,x}, \qquad \text{where }V_{0,x}\in \U(2^k). 
\end{equation*}
Similarly, since $t(2^k + i) = t(i)$ for any $i\in [2^k-1]$, each of $U_{2^{k} + 1}, \ldots, U_{2^{k} + 2^k - 1}$ also acts on one of the last $k$ qubits. 
So their product $W_1 \coloneqq U_{2^{k} + 1}\cdots U_{2^{k} + 2^k - 1}$ is also a multi-controlled $k$-qubit unitary. 
Let us fully generalize this. 
For each $j\in \{0, 1,\ldots, 2^{n-k}-1\}$, let
\begin{equation}\label{eq:product_W}
    W_j \coloneqq U_{j\cdot 2^{k} + 1}\cdot U_{j\cdot 2^{k} + 2} \cdots U_{j\cdot 2^{k} + 2^{k}-1} = \sum_{x\in \{0,1\}^{n-k}} \ketbra{x} \otimes V_{j,x},\qquad \text{where }V_{j,x}\in \U(2^k).
\end{equation}
Overall,
\begin{equation}\label{eq:decompose_U_into_Ws}
    U = W_0 \cdot \prod_{j=1}^{2^{n-k}-1} U_{j\cdot 2^k}\cdot W_{j}. 
\end{equation}

If we implement each $W_j$, a product of $2^k-1$ controlled unitaries, to error $\epsilon \cdot 2^{-(n-k)}$ naively as in \Cref{subsec:naive}, the T-count is $O(2^k (  \sqrt{2^{n}(n+ \log(1/\epsilon))} + \log(1/\epsilon) ))$. 
If we ignore the polynomial dependence on $n$ and take $\epsilon$ to be constant, this T-count is on the order of $2^{k + n/2}$. 
However, we will show in \Cref{lem:generalized_control_unitary} that there is a cheaper way to implement $W_j$, with cost roughly on the order of $2^{(n+k)/2} + 2^{2k}$. 

\paragraph{Notation}
Let $U_i$ be a multi-controlled single-qubit unitary targeting qubit $t(i)$, i.e.\
\begin{equation*}
    U_i = \sum_{x\in \{0,1\}^{n-1}} \ketbra{x_1,\ldots, x_{t(i)-1}} \otimes V_{i,x} \otimes \ketbra{x_{t(i)},\ldots, x_{n-1}},
\end{equation*}
where $V_{i,x}\in \U(2)$. 
We adopt the following notation to simplify the writing of $U_i$ as
\begin{equation*}
    U_i = \sum_{x\in \{0,1\}^{n-1}} \ketbra{x} \otimes \left[V_{i,x}\right]_{t(i)},
\end{equation*}
where the subscript $t(i)$ in $\left[A\right]_{t(i)}$ indicates that $A$ actually acts on the qubit indexed by $t(i)$ and the control qubits are indexed by $[n]\setminus \{t(i)\}$.

\subsection{Implementing multi-controlled \texorpdfstring{$k$}{k}-qubit unitaries}

In \Cref{lem:special_k_control}, we first show a special case of implementing a product of $m$ multi-controlled single-qubit unitaries, 
each acting on one of $k$ designated target qubits.
Then we plug in $m = O(2^k)$ and bound the cost of implementing a general multi-controlled $k$-qubit unitary in \Cref{lem:generalized_control_unitary}. 

\begin{lemma}[T-count for a product of multi-controlled single-qubit unitaries]\label{lem:special_k_control}
    Let $n,m,k$ be positive integers and $k<n$.  
    For each $i\in [m]$, let 
    \begin{equation*}
        U_i = \sum_{x\in \{0,1\}^{n-1}}\ketbra{x} \otimes \left[R_{i, x}\right]_{h_i}, \qquad \text{where }R_{i,x}\in \SU(2) \text{ and } h_i\in [k]. 
    \end{equation*}
    Let $U = U_1U_2\cdots U_m$. Then for any $\epsilon>0$, there exists $\widetilde{U}\in \U(2^n)$ such that $\norm*{U - \widetilde{U}}\leq \epsilon$ and $\widetilde{U}$ admits an exact Clifford+T implementation using $O(2^{n/2}\cdot \sqrt{m\cdot \log(m/\epsilon)} + 2^k\cdot m\cdot \log(m/\epsilon))$ T gates and ancillae.
\end{lemma}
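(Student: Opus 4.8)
The plan is to follow the strategy of \cite{GKW24}, which delegates the application of Hadamard and T gates to Boolean function oracles, but to add a polynomial-factoring step that lets us batch essentially all the oracle calls into one. Set $\ell = \Theta(\log(m/\epsilon))$. For each $i\in[m]$ and each control string $x\in\{0,1\}^{n-1}$, apply \Cref{lem:single_qubit_H_T} to write an approximation $\widetilde R_{i,x} = G_{i,x,1}\cdots G_{i,x,\ell}$, a product of $\ell$ gates from $\{H,T,I\}$ (padding with the identity to a uniform length), with $\norm{R_{i,x} - \widetilde R_{i,x}}\le \epsilon/m$. Put $\widetilde U_i = \sum_{x}\ketbra{x}\otimes[\widetilde R_{i,x}]_{h_i}$ and $\widetilde U = \widetilde U_1\cdots\widetilde U_m$; by \Cref{lem:telescope_operator_norm_ineq}, $\norm{U-\widetilde U}\le\epsilon$, so it suffices to give an exact Clifford+T implementation of $\widetilde U$ within the stated budget. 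For $i\in[m]$ and $p\in[\ell]$ let $a_{i,p}(x),b_{i,p}(x)\in\{0,1\}$ indicate whether $G_{i,x,p}$ equals $H$, respectively $T$. A first attempt processes the $m\ell$ ``slots'' $(i,p)$ one at a time: compute $a_{i,p}(x)$ into an ancilla, apply a singly-controlled $H$ on qubit $h_i$ (exact Clifford+T with $O(1)$ T gates), uncompute, and likewise for $b_{i,p}$ with a controlled $T$. This is correct but costs $m\ell$ oracle calls, i.e.\ $O(m\ell\sqrt{2^n})$ T gates --- too much.

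To batch, split the register into the first $k$ qubits ($A$) and the last $n-k$ qubits ($B$). Since every $h_i\in[k]$, no gate ever acts on $B$, so the whole circuit is block-diagonal over $B$ and the $B$-qubits can be read repeatedly. Write the control string for slot $(i,p)$ as $(x^A,x^B)$, where $x^A$ ranges over the $A$-qubits other than $h_i$. Grouping the monomials of the (unique multilinear) $\mathbb{F}_2$-representation of $a_{i,p}$ by their $x^A$-part gives
\[
    a_{i,p}(x^A,x^B) = \bigoplus_{S\subseteq[k]\setminus\{h_i\}} \Big(\prod_{j\in S} x^A_j\Big)\cdot q^{a}_{i,p,S}(x^B),
\]
and analogously for $b_{i,p}$ with functions $q^{b}_{i,p,S}$, where each $q^{a}_{i,p,S},q^{b}_{i,p,S}:\{0,1\}^{n-k}\to\{0,1\}$ depends only on $x^B$. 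Now implement $\widetilde U$ in three phases. \emph{Phase I:} one call of the Boolean oracle from \Cref{lem:decomp_bool_oracle} computes all values $q^{a}_{i,p,S}(x^B),q^{b}_{i,p,S}(x^B)$ for $i\in[m]$, $p\in[\ell]$, $S\subseteq[k]$ into ancillas --- an oracle on $n-k$ input bits with $r = O(2^k m\ell)$ output bits, costing $O(\sqrt{2^{n-k}\cdot 2^k m\ell}) = O(2^{n/2}\sqrt{m\ell})$ T gates and ancillae. \emph{Phase II:} process the slots in order; at slot $(i,p)$, generate all monomials $\prod_{j\in S}x^A_j$ of the \emph{current} $A$-register over the variables $[k]\setminus\{h_i\}$ using \Cref{lem:all_monomials} ($O(2^k)$ Toffoli gates), XOR the $\le 2^{k-1}$ products of a monomial bit with the stored bit $q^{a}_{i,p,S}(x^B)$ --- each product an AND computed by a single Toffoli --- into an accumulator to obtain $a_{i,p}$, do the same for $b_{i,p}$, apply the controlled $H$ and controlled $T$ on qubit $h_i$, and then uncompute everything; since $a_{i,p}$ and $b_{i,p}$ do not involve qubit $h_i$, all monomial and $q$ ancillas survive the controlled gates and the uncomputation is immediate, so this costs $O(2^k)$ T gates per slot and $O(2^k m\ell)$ overall. \emph{Phase III:} uncompute the Phase I oracle, another $O(2^{n/2}\sqrt{m\ell})$ T gates. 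Summing and substituting $\ell=\Theta(\log(m/\epsilon))$ yields the claimed T-count and ancilla count.

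The main obstacle is getting the batching right: one has to observe that the polynomials $q_{i,p,S}$ depend only on the untouched $B$-register, so all $O(2^k m\ell)$ of them are produced by a \emph{single} oracle call of cost $O(\sqrt{2^n\cdot m\ell})$, after which every slot needs only cheap $O(2^k)$-size Toffoli bookkeeping against the current $A$-register. The accompanying correctness points --- that within a single $\widetilde U_i$ only qubit $h_i$ is modified, that the relevant Boolean functions are independent of $h_i$, and that every ancilla is restored to $\ket{0}$ --- together with the standard fact that controlled $H$ and controlled $T$ are exact Clifford+T with $O(1)$ T gates, are the remaining details to verify carefully.
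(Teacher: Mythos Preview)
Your proposal is correct and follows essentially the same approach as the paper: approximate each $R_{i,x}$ by a bounded-length $H$/$T$ word, factor the resulting Boolean control functions by separating the first $k$ variables from the last $n-k$, precompute all of the $(n-k)$-variable pieces in a single oracle call via \Cref{lem:decomp_bool_oracle}, and then assemble each control bit with $O(2^k)$ Toffolis before applying a controlled $H$ or $T$. The only cosmetic differences are that the paper encodes each word as an alternating product $H^{f_{i,1}}T^{f_{i,2}}\cdots$ rather than using two indicator bits per slot, and regenerates the monomials once per $i$ (then computes all $2L$ values $f_{i,j}$ together) rather than once per slot $(i,p)$; neither changes the asymptotics.
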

\begin{proof}
    We first describe the construction of $\widetilde{U} \in \U(2^n)$. 
    For each $i\in [m]$ and $x\in \{0,1\}^{n-1}$, since $R_{i, x} \in \SU(2)$, it follows from \Cref{lem:single_qubit_H_T} that there exists $\widetilde{R}_{i,x}\in \SU(2)$ such that $\norm*{R_{i,x} - \widetilde{R}_{i,x}} \leq \epsilon/m$ and $\widetilde{R}_{i,x}$ can be written exactly as a product of $L$ Hadamard and $L$ T gates where $L = \lceil c\cdot \log (m/\epsilon) \rceil$ for some constant $c>0$. 
    More concretely, let us write
    \begin{equation*}
        \widetilde{R}_{i,x} = H^{f_{i,1}(x)} \cdot T^{f_{i,2}(x)} \cdots H^{f_{i,2L-1}(x)} \cdot T^{f_{i,2L}(x)}, 
    \end{equation*}
    where $f_{i, j}: \{0,1\}^{n-1} \mapsto \{0,1\}$ for each $j\in [2L]$.  
    Then $\widetilde{U}_i \coloneqq \sum_{x\in \{0,1\}^{n-1}}\ketbra{x} \otimes [\widetilde{R}_{i, x}]_{h_i}$ implements $U_i$ to error $\epsilon/m$. 
    Overall, by \Cref{lem:telescope_operator_norm_ineq}, 
    $\widetilde{U} \coloneqq \widetilde{U}_1\widetilde{U}_2\cdots \widetilde{U}_m$ implements $U$ to error $\epsilon$.  

    It remains to calculate the T-count and ancilla-count for implementing $\widetilde{U}$. 

    \begin{remark}
    Let us take a detour and recall the high-level idea for implementing each $\widetilde{U}_i$. 
    The Boolean function $f_{i,j}$ can be written as a polynomial in $n-1$ variables $x_1, \ldots, x_{h_i-1}, x_{h_i+1},\ldots, x_{n}$ over $\F_2$. 
    Let us write $x = (x_1, \ldots, x_{h_i-1}, x_{h_i+1},\ldots, x_{n}) \in \{0,1\}^{n-1}$. 
    Suppose that we can implement 
    \begin{equation}\label{eq:encode_control_logic}
        \ket{x}\otimes \left(\bigotimes_{j\in [2L]} \ket{y_j}\right) \mapsto
        \ket{x} \otimes \left(\bigotimes_{j\in [2L]} \ket{y_j \oplus f_{i,j}(x)} \right).
    \end{equation}
    Then we can implement $\widetilde{U}_i$ as follows: 
    \begin{enumerate}
        \item prepare $2L$ ancillae in $\ket{0^{2L}}$. 
        \item run \Cref{eq:encode_control_logic} once on all qubits except the one indexed by $h_i$ (whose state is $\ket{x_{h_i}}$). 
        \item sequentially apply $L$ controlled-Hadamard and $L$ controlled-T gates, each of which is controlled on the corresponding qubit with state $\ket{f_{i,j}(x)}$ and acts on the same qubit indexed by $h_i$. 
        \item run \Cref{eq:encode_control_logic} once to uncompute each ancilla register with state $\ket{f_{i,j}(x)}$ back to $\ket{0}$. 
    \end{enumerate}
    In summary, 
    \begin{align*}
        \ket{x} \otimes \ket{x_{h_i}} \otimes \ket{0^{2L}} &\mapsto \ket{x} \otimes \ket{x_{h_i}} \otimes \left(\bigotimes_{j\in [2L]} \ket{f_{i,j}(x)} \right) \tag{step $2$}\\
        &\mapsto \ket{x}\otimes \left(\widetilde{R}_{i,x}\cdot \ket{x_{h_i}}\right) \otimes \left(\bigotimes_{j\in [2L]} \ket{f_{i,j}(x)} \right) \tag{step 3}\\
        &\mapsto \ket{x}\otimes \left(\widetilde{R}_{i,x}\cdot \ket{x_{h_i}}\right) \otimes \ket{0^{2L}}  \tag{step 4}.
    \end{align*}
    So to implement $\widetilde{U}$, it boils down to the Clifford+T implementation of 
    \Cref{eq:encode_control_logic} for each $i\in [m]$ and how their product can be optimized altogether. 
    \end{remark}

    We now continue the proof of \Cref{lem:special_k_control}. 
    From now on we write $x=(x_1,\ldots,x_n)$, and each $f_{i,j}$ is extended to a polynomial in $n$ variables (i.e.\ its dependence on $x_{h_i}$ is trivial). 

    Since all the targeting qubits $h_i$ can only be in the first $k$ qubits, we can decompose $f_{i,j}$ in a way that separates the variables $x_1,\ldots,x_{k}$ from the rest: for each $i\in [m]$ and $j\in [2L]$, 
    \begin{equation}\label{eq:poly_factor}
        f_{i,j}(x) = \sum_{b\in \{0,1\}^{k}} (x_1^{b_1} \cdots x_{k}^{b_k}) \cdot g_{i,j,b}(x_{k+1}, \ldots, x_n),
    \end{equation}
    where $g_{i,j,b}$ is a polynomial in $x_{k+1}, \ldots, x_n$ over $\F_2$. 
    We denote all the $g_{i,j,b}$ polynomials together by 
    \begin{equation}\label{eq:all_polys_g}
        g:\{0,1\}^{n-k} \to \{0,1\}^{m\cdot(2L)\cdot 2^{k}}, \qquad \text{where} \quad g(x_{k+1},\ldots, x_n)_{i,j,b} = g_{i,j,b}(x_{k+1},\ldots, x_n). 
    \end{equation}

    Now we are ready to describe the final algorithm: 
    \begin{enumerate}
        \item Prepare $M$ ancillae in $\ket{0^M}$. Denote the input state as $\ket{x_1,\ldots,x_{n}} \otimes \ket{0^M}$. We will specify $M$ later. 

        \item Recall $g:\{0,1\}^{n-k} \to \{0,1\}^{m\cdot (2L)\cdot 2^{k}}$ defined in \Cref{eq:all_polys_g}. 
        Using \Cref{lem:decomp_bool_oracle} with $r = m\cdot (2L)\cdot 2^{k}$, the unitary $U_g$ given by
        \begin{equation*}
            \ket{x_{k+1},\ldots, x_n}\otimes \Big(\bigotimes_{\substack{i\in[m],j\in [2L],\\b\in\{0,1\}^{k}}}\ket{y_{i,j,b}}\Big) \mapsto \ket{x_{k+1},\ldots, x_n}\otimes \Big(\bigotimes_{\substack{i\in[m],j\in [2L],\\b\in\{0,1\}^{k}}}\ket{y_{i,j,b} \oplus g_{i,j,b}(x_{k+1},\ldots, x_n)}\Big)
        \end{equation*}
        admits an exact Clifford+T implementation using $O(\sqrt{2^{n-k}\cdot r}) = O(2^{n/2}\sqrt{mL})$ T gates and ancillae. Applying $U_g$ gives
        \begin{equation*}
            \ket{x_1,\ldots,x_n}\otimes \Big(\underbrace{\bigotimes_{\substack{i\in[m],j\in [2L],\\b\in\{0,1\}^{k}}}\ket{g_{i,j,b}(x_{k+1},\ldots, x_n)}}_{\coloneqq \ket{A}}\Big) \otimes \ket{0^{M_A}}.
        \end{equation*}
        Here, the ancilla count must satisfy $M = \Omega(mL2^k + M_A)$ and $M_A= \Omega(2^{n/2}\sqrt{mL})$.

        \item For each $i\in [m]$:
        \begin{enumerate}
            \item Apply the monomials generating unitary in \Cref{lem:all_monomials} to produce the state that encodes all monomials in $x_1, \ldots, x_{h_i-1}, x_{h_i +1}, \ldots, x_k$. 
            With a permutation of registers, the resulting state is given by
            \begin{equation*}
            \ket{x_{h_i}} \otimes \ket{A} \otimes  \Big(\underbrace{\bigotimes_{a\in \{0,1\}^{k-1}} \ket{x_{1}^{a_1}\cdots x_{h_i-1}^{a_{h_i - 1}} x_{h_i+1}^{a_{h_i}}\cdots x_{k}^{a_{k - 1}} } }_{\coloneqq \ket{B}}\Big)\otimes \ket{0^{M_B}},
            \end{equation*}
            using $O(2^{k})$ Toffoli gates and ancillae. 
            Here, the ancilla counts must satisfy $M_A = \Omega(2^k + M_B)$. 
             
            \item Set $M_B = 2L\cdot 2^k$. By \Cref{eq:poly_factor}, one can use $2L \cdot 2^k$ Toffoli gates to produce
            \begin{equation*}
                \ket{x_{h_i}} \otimes \ket{A} \otimes \ket{B}\otimes \Big( \bigotimes_{j\in [2L]}\ket{f_{i,j}(x)}\Big). 
            \end{equation*}
    
            \item Use $L$ controlled-Hadamard and $L$ controlled-T gates to produce
            \begin{equation*}
                \left(\widetilde{R}_{i,x}\cdot \ket{x_{h_i}}\right) \otimes \ket{A} \otimes \ket{B}\otimes \Big( \bigotimes_{j\in [2L]}\ket{f_{i,j}(x)}\Big). 
            \end{equation*}
            
            \item Uncompute the state to 
            \begin{equation*}
                \ket{x_1,\ldots x_{h_i-1}} \otimes \left(\widetilde{R}_{i,x}\cdot \ket{x_{h_i}}\right) \otimes \ket{x_{h_i+1},\ldots,x_k} \otimes \ket{A} \otimes \ket{0^{M_A}}.
            \end{equation*}
        \end{enumerate}
    \end{enumerate}
    Overall, both the number of ancillae and T-count are $O(2^{n/2}\cdot \sqrt{mL} + 2^{k}\cdot mL)$. 
\end{proof}

We now bound the T-count for implementing a general multi-controlled $k$-qubit unitary. 

\begin{lemma}[T-count for multi-controlled $k$-qubit unitaries]\label{lem:generalized_control_unitary}
    Let $k$ be a positive integer smaller than $n$. 
    For any $\epsilon >0$ and any multi-controlled $k$-qubit unitary $W\in \U(2^n)$ targeting the last $k$ qubits, i.e.\
    \begin{equation*}
        W = \sum_{x\in \{0,1\}^{n-k}} \ketbra{x} \otimes V_x, \qquad \text{where } V_x\in \U(2^k), 
    \end{equation*}
    there exists $\widetilde{W} \in \U(2^{n+1})$ such that $\widetilde{W}$ implements $W$ to error $\epsilon$ and $\widetilde{W}$ admits an exact Clifford+T implementation using $O(2^{(n+k)/2} \sqrt{k+\log(1/\epsilon)} + 4^{k} (k+\log(1/\epsilon)))$ T gates and ancillae. 
\end{lemma}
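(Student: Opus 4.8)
The plan is to reduce the implementation of a general multi-controlled $k$-qubit unitary $W$ to the special case handled by \Cref{lem:special_k_control}, by first factoring $W$ into a product of multi-controlled single-qubit unitaries and then counting how many of them there are. Concretely, apply \Cref{thm:recursive_CSD} to each block $V_x\in\U(2^k)$; but more efficiently, recall that the proof of \Cref{thm:recursive_CSD} shows that \emph{any} $k$-qubit unitary factors as a product of $2^k-1$ multi-controlled single-qubit unitaries, each targeting one of the $k$ qubits according to the pattern $t_k(\cdot)$. Since the block-diagonal structure of $W$ over the control register $\{0,1\}^{n-k}$ is preserved under this factorization (we simply apply the same $\ell$-th factor inside each block, controlled additionally by the $n-k$ control qubits), we obtain $W = \widehat U_1\widehat U_2\cdots \widehat U_{2^k-1}$ where each $\widehat U_\ell$ is an $(n-1)$-fold controlled single-qubit unitary targeting one of the last $k$ qubits. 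So $W$ is exactly a product of $m = 2^k-1$ gates of the form required by \Cref{lem:special_k_control}, modulo the mild issue that the single-qubit blocks need not be in $\SU(2)$.

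Next I would dispose of the $\SU(2)$ versus $\U(2)$ discrepancy in the same way as in the proof of \Cref{thm:GKW_T_count}: each $\widehat U_\ell = \sum_x\ketbra{x}\otimes[V_{\ell,x}]_{h_\ell}$ with $V_{\ell,x} = e^{\ii\theta_{\ell,x}}R_{\ell,x}$, $R_{\ell,x}\in\SU(2)$, so $\widehat U_\ell = D_\ell\cdot R_\ell$ where $D_\ell = \sum_x\ketbra{x}\otimes(e^{\ii\theta_{\ell,x}}I)$ is a diagonal unitary on $n$ qubits and $R_\ell = \sum_x\ketbra{x}\otimes[R_{\ell,x}]_{h_\ell}$ has all $\SU(2)$ blocks. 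Collecting the diagonal parts, $W = \big(\prod_\ell' D_\ell'\big)$-interleaved with the $R_\ell$'s; but since each $D_\ell$ is diagonal it commutes past nothing in general — instead I would simply treat the list $D_1, R_1, D_2, R_2, \ldots$ as a product of $2(2^k-1)$ factors. Each $D_\ell$ is a diagonal unitary on $n$ qubits targeting (vacuously) the last $k$ qubits, so it too fits the template of \Cref{lem:special_k_control} after conjugating one qubit into a fixed target (diagonal unitaries are multi-controlled single-qubit unitaries, cf.\ the discussion around \Cref{eq:multi_control_SU2}); alternatively, since the diagonal phases can be absorbed one can fold them into an extra $O(2^k)$ single-qubit factors. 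Either way, the total count of multi-controlled single-qubit factors remains $m = O(2^k)$.

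Then I would invoke \Cref{lem:special_k_control} directly with $m = O(2^k)$ and error parameter $\epsilon$. The lemma gives a Clifford+T implementation using
\begin{equation*}
    O\!\big(2^{n/2}\sqrt{m\log(m/\epsilon)} + 2^k\, m\,\log(m/\epsilon)\big)
\end{equation*}
T gates and ancillae. Substituting $m = O(2^k)$ gives $\log(m/\epsilon) = O(k+\log(1/\epsilon))$, hence $2^{n/2}\sqrt{m\log(m/\epsilon)} = O(2^{(n+k)/2}\sqrt{k+\log(1/\epsilon)})$ and $2^k m\log(m/\epsilon) = O(4^k(k+\log(1/\epsilon)))$, which is exactly the claimed bound. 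The one extra ancilla in $\widetilde W\in\U(2^{n+1})$ is needed precisely to carry the diagonal-phase correction (as in \Cref{thm:GKW_T_count}), or it can be absorbed into the $O(\cdot)$ ancilla count. Finally I would appeal to \Cref{lem:telescope_operator_norm_ineq} to confirm that approximating each of the $O(2^k)$ factors to error $\epsilon/m$ yields total error $\epsilon$ for $W$ — though in fact \Cref{lem:special_k_control} already packages this telescoping internally, so it suffices to quote it with the global error $\epsilon$.

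The main obstacle I anticipate is purely bookkeeping: verifying that the recursive CS factorization of the $k$-qubit blocks genuinely produces factors of the exact syntactic form $\sum_{x\in\{0,1\}^{n-1}}\ketbra{x}\otimes[R_{i,x}]_{h_i}$ with $h_i\in[k]$ demanded by \Cref{lem:special_k_control}, i.e.\ that all $2^k-1$ targets really do land among the designated $k$ qubits and that the $(n-k)$ control qubits plus the $k-1$ non-target qubits among the block's registers combine correctly into the $n-1$ control qubits. This follows from the $t_n$-target structure in \Cref{thm:recursive_CSD} together with the observation that the outer control register is inert, but it requires care to state cleanly. The $\SU(2)$ normalization and the single extra ancilla are minor and handled exactly as in \Cref{thm:GKW_T_count}; the arithmetic simplification of the bound is routine.
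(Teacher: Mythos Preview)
Your proposal is correct and follows essentially the same route as the paper: recursive CS decomposition of $W$ into $2^k-1$ multi-controlled single-qubit factors, splitting each into a diagonal phase times an $\SU(2)$ block, and then a single invocation of \Cref{lem:special_k_control} with $m=2(2^k-1)$. The only place where your exposition is looser than the paper's is the treatment of the diagonal factors $D_\ell$: your suggestions ``conjugating one qubit into a fixed target'' and ``folding into extra single-qubit factors'' do not by themselves produce $\SU(2)$ blocks, which is what \Cref{lem:special_k_control} requires; the paper makes the mechanism explicit by setting $D_\ell = \Phi_\ell\otimes\ketbra{0}_{\anc} + \Phi_\ell^\dagger\otimes\ketbra{1}_{\anc}$, so that the single-qubit block on the ancilla is $\diag(e^{\ii\theta},e^{-\ii\theta})\in\SU(2)$, and then applies \Cref{lem:special_k_control} on $n+1$ qubits with the $k+1$ designated targets $\{n-k+1,\ldots,n+1\}$---precisely the ancilla trick you allude to via \Cref{thm:GKW_T_count}.
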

\begin{proof}
    It follows from the recursive CS decomposition in \Cref{thm:recursive_CSD} (see also \Cref{eq:product_W}) that $W$ can be written as
    \begin{equation*}
        W = U_1 U_2 \cdots U_{2^k - 1},
    \end{equation*}
    where for each $i \in [2^k-1]$, $U_i$ is a multi-controlled single-qubit unitary targeting qubit $t(i)\in [n]\setminus [n-k]$, i.e.\
    \begin{equation*}
        U_i = \sum_{x\in \{0,1\}^{n-1}} \ketbra{x} \otimes \left[e^{\ii \cdot \theta_{i,x}}\cdot R_{i,x}\right]_{t(i)}, 
    \end{equation*}
    where $\theta_{i,x} \in [0, 2\pi)$, $R_{i,x} \in \SU(2)$. Here we single out the phases $e^{\ii \cdot \theta_{i,x}}$ so that each $R_{i,x}$ admits a Hadamard+T approximation following from \Cref{lem:single_qubit_H_T}. 
    Let us write
    \begin{equation*}
        R_i \coloneqq \sum_{x\in \{0,1\}^{n-1}} \ketbra{x} \otimes \left[ R_{i,x}\right]_{t(i)} \quad \text{and} \quad \Phi_i \coloneqq \sum_{x\in \{0,1\}^{n-1}} \ketbra{x} \otimes \left[\diag(e^{\ii \cdot \theta_{i,x}}, e^{\ii \cdot \theta_{i,x}})\right]_{t(i)}. 
    \end{equation*}
    So $U_i = R_i \cdot \Phi_i$. 
    To relate $\Phi_i$ to a unitary operator with determinant $1$, we use a similar trick as in the proof of \cite[Theorem 1.2]{GKW24} and consider
    \begin{equation*}
        D_i \coloneqq \Phi_i \otimes \ketbra{0}_\anc + \Phi_i^\dagger \otimes \ketbra{1}_\anc = \sum_{x\in \{0,1\}^{n-1}} \ketbra{x} \otimes [I_2]_{t(i)} \otimes \underbrace{\diag(e^{\ii \cdot \theta_{i,x}}, e^{-\ii \cdot \theta_{i,x}})}_{\coloneqq D_{i,x} \in \SU(2)}{}_\anc,
    \end{equation*}
    where the last equality is because
    \begin{align*}
        \diag(e^{\ii \cdot \theta_{i,x}}, e^{\ii \cdot \theta_{i,x}}) \otimes \ketbra{0} + \diag(e^{-\ii \cdot \theta_{i,x}}, e^{-\ii \cdot \theta_{i,x}}) \ketbra{1} &= \diag (e^{\ii \cdot \theta_{i,x}}, e^{-\ii \cdot \theta_{i,x}}, e^{\ii \cdot \theta_{i,x}}, e^{-\ii \cdot \theta_{i,x}}) \\
        &= I_2 \otimes \diag(e^{\ii \cdot \theta_{i,x}}, e^{-\ii \cdot \theta_{i,x}}). 
    \end{align*}
    This additional ancilla register $\anc$ does not affect the overall implementation of $U_i$ because 
    for any $\ket{\psi}\in \C^{2^n}$, 
    \begin{equation*}
        (U_i \ket{\psi})\otimes \ket{0} = (R_i\cdot \Phi_i\cdot \ket{\psi}) \otimes \ket{0} = (R_i \otimes I_2) \cdot D_i \cdot (\ket{\psi} \otimes \ket{0}). 
    \end{equation*}
    In other words, $(R_i \otimes I_2) \cdot D_i$ implements $U_i$ to error $0$. 

    Now, $R_i \otimes I_2$ and $D_i \in \SU(2^{n+1})$ are $n$-fold controlled single-qubit unitaries, targeting qubit $t(i) \in [n]\setminus [n-k]$ and qubit $n+1$ (i.e.\ register $\anc$) respectively. Hence, we can further simplify to write them as
    \begin{equation*}
        R_i \otimes I_2 = \sum_{x\in \{0,1\}^n } \ketbra{x} \otimes \left[R_{i,x}\right]_{t(i)}\quad \text{and}\quad D_i = \sum_{x\in \{0,1\}^n } \ketbra{x} \otimes D_{i, x},
    \end{equation*}
    where $R_{i,x}, D_{i,x}\in \SU(2)$. 
    Overall,
    \begin{align*}
        (W \ket{\psi})\otimes \ket{0} &= (U_1U_2\cdots U_{2^k-1} \ket{\psi})\otimes \ket{0} \\
        &= \left( \prod_{i=1}^{2^k-1} \left(\sum_{x\in \{0,1\}^n } \ketbra{x} \otimes \left[R_{i,x}\right]_{t(i)} \right) \cdot \left(\sum_{x\in \{0,1\}^n } \ketbra{x} \otimes D_{i,x} \right)\right) \cdot (\ket{\psi} \otimes \ket{0}). 
    \end{align*}
    Applying \Cref{lem:special_k_control} with $m = 2\cdot (2^k-1)$, we know that there exists $\widetilde{W}\in \U(2^{n+1})$ such that $\widetilde{W}$ implements $W$ to error $\epsilon$ and $\widetilde{W}$ admits an exact Clifford+T implementation using 
    $O(2^{(n+k)/2} \sqrt{k+\log(1/\epsilon)} + 4^{k} (k+\log(1/\epsilon)))$ T gates and ancillae. 
\end{proof}

\subsection{Proof of \texorpdfstring{\Cref{thm:main-thm}}{Theorem 1.1}}

\mainthm*
\begin{proof}
    It follows from the recursive CS decomposition in \Cref{thm:recursive_CSD} (see also \Cref{eq:decompose_U_into_Ws}) that
    \begin{equation*}
        U = W_0 \cdot \prod_{j=1}^{2^{n-k}-1} U_{j\cdot 2^k}\cdot W_{j}, 
    \end{equation*}
    where each $W_j$ is a multi-controlled $k$-qubit unitary acting on the last $k$ qubits. 
    We have already calculated the cost for implementing each $U_{j\cdot 2^k}$ and $W_j$: 
    \begin{itemize}
        \item By \Cref{thm:GKW_T_count}, for each $j\in \{1, \ldots, 2^{n-k}-1\}$, there exists $\widetilde{U}_{j\cdot 2^k}\in \U(2^{n+1})$ such that  $\widetilde{U}_{j\cdot 2^k}$ implements $U_{j\cdot 2^k}$ to error $\delta$ and $\widetilde{U}_{j\cdot 2^k}$ admits an exact Clifford+T implementation using $O(\sqrt{2^n\cdot \log(1/\delta)} + \log(1/\delta))$ T gates and ancillae. 

        \item By \Cref{lem:generalized_control_unitary}, for each $j\in \{0, 1,\ldots, 2^{n-k}-1\}$, there exists $\widetilde{W}_j$ such that $\widetilde{W}_j$ implements $W_j$ to error $\delta$ and $\widetilde{W}_j$ admits an exact Clifford+T implementation using $O(2^{(n+k)/2} \sqrt{k+\log(1/\delta)} + 4^{k} (k+\log(1/\delta)))$ T gates and ancillae. 
    \end{itemize}
    Let $\delta = \epsilon \cdot 2^{-(n-k)}/2$ and $L = k + \log(1/\delta) = n + \log(1/\epsilon)$. 
    Then the total ancilla-count is
    \begin{equation}\label{eq:overall_ancilla_count}
        O\!\left( \sqrt{2^n \cdot \log(1/\delta)} + \log(1/\delta) + 2^{\frac{n+k}{2}} \sqrt{k+\log(1/\delta)} + 4^{k} (k+\log(1/\delta))\right) = O\!\left( 2^{\frac{n+k}{2}} \sqrt{L} + 4^k L\right),
    \end{equation}
    and hence the total T-count is 
    \begin{equation}\label{eq:overall_T_count}
        O\!\left(2^{n-k} \cdot \left(2^{\frac{n+k}{2}} \sqrt{L} + 4^k L\right)\right) . 
    \end{equation}
    For any positive integer $k$ satisfying $2^{\frac{n+k}{2}} \sqrt{L} \geq 4^k L$, i.e.\ $2^k \leq 2^{n/3}\cdot L^{-1/3}$ or $k\leq (n - \log_2 L)/2$, we have that 
    \begin{equation}\label{eq:resource_counts_k}
        \eqref{eq:overall_ancilla_count}= O\!\left(2^{\frac{n+k}{2}}\sqrt{L}\right)  \qquad \text{and} \qquad
        \eqref{eq:overall_T_count} = O\!\left(2^{\frac{3n-k}{2}}\sqrt{L}\right)  . 
    \end{equation}
    The ancillae and T gates tradeoff follows from the fact that $2^{\frac{n+k}{2}}\cdot 2^{\frac{3n-k}{2}} = 2^{2n}$. 
    This proves the second part of \Cref{thm:main-thm}. 
    To prove the first part, we distinguish two cases and choose $k$ to minimize the T-count in each. 
    \begin{itemize}
        \item When $L\geq 2^n$, we have that $4^kL\geq 2^{\frac{n+k}{2}} \sqrt{L}$ for any $k$. Then
        \begin{align*}
            \eqref{eq:overall_ancilla_count} = O\!\left(4^k \cdot L\right) \qquad \text{and}\qquad
            \eqref{eq:overall_T_count} = O\!\left(2^{n-k} \cdot 4^k \cdot L\right) = O\!\left(2^{n+k} \cdot L\right) . 
        \end{align*}
        So we should set $k$ to be the smallest possible value, i.e.\ $k=1$. 

        \item When $L\leq 2^n$, let us set $k = \lfloor (n - \log_2 L)/3 \rfloor$. Then by \Cref{eq:resource_counts_k}, we have
        \begin{equation*}
            \eqref{eq:overall_ancilla_count} =  
            O\!\left(2^{2n/3}\cdot L^{1/3}\right) \qquad \text{and} \qquad
            \eqref{eq:overall_T_count} =  
            O\!\left(2^{4n/3} \cdot L^{2/3}\right) . 
        \end{equation*}
    \end{itemize}
    Combining the above two cases gives 
    \begin{equation*}
        \text{ancilla-count}: O\!\left(L + 2^{2n/3}\cdot L^{1/3} \right), \qquad \text{T-count}:  O\!\left(2^n \cdot L +  2^{4n/3} \cdot L^{2/3}\right).  \qedhere
    \end{equation*}
\end{proof}

\section*{Acknowledgments}

The author would like to thank Bill Huggins and Robin Kothari for many insightful discussions throughout the course of this project, Kewen Wu for comments on an early-stage note that developed into this paper, and Aram Harrow, Nathan Wiebe, and John Wright for helpful discussions. 
This work was done when the author was a Student Researcher at Google.

\bibliographystyle{alpha}
\bibliography{ref}

\appendix
\section{Proof of \texorpdfstring{\Cref{lem:telescope_operator_norm_ineq}}{Lemma 2.1}}
\label{app:supp_proof}

\lemcompositionerrorbound*

\begin{proof}
We will prove the lemma by induction on $L$.

The base case of $L=1$ is trivial. 
For the induction step, assume the claim holds for a product of $L-1$ terms for some $L\geq 2$, that is  
$W = (V_1\otimes I_{2^{m-m_1}})\cdots (V_{L-1}\otimes I_{2^{m-m_{L-1}}}) \in \U(2^{n+m})$ implements $U = U_1 \cdots U_{L-1}$ to error $(L-1)\epsilon$, i.e.\
\begin{equation}\label{eq:telescope_induction_hypo}
    \norm{W\cdot (I_{2^n} \otimes \ket{0^m}) - U_1\cdots U_{L-1}\otimes \ket{0^m}} \leq (L-1)\epsilon. 
\end{equation}
Without loss of generality, we can set $m = \max_{i\in [L]} m_i$. 
The goal is to show that when we add one more term of $V_L\otimes I_{2^{m-m_L}}$ to $W$, for any $n$-qubit state $\ket{\psi}$, we have that
\begin{equation}\label{eq:telescope_goal}
    \norm{W\cdot (V_{L}\otimes I_{2^{m-m_{L}}}) \cdot (\ket{\psi} \otimes \ket{0^m}) - U_1 \cdots U_{L}\ket{\psi} \otimes  \ket{0^m}} \leq L\epsilon. 
\end{equation}
Let us adopt a telescoping sum argument to the left-hand side of \Cref{eq:telescope_goal}. 
\begin{align*}
    &\norm{W(V_{L}\otimes I_{2^{m-m_{L}}}) \cdot (\ket{\psi} \otimes \ket{0^m}) - U_1 \cdots U_{L}\ket{\psi} \otimes  \ket{0^m}} \\
    = \ & \norm{W (V_{L}\otimes I_{2^{m-m_{L}}}) \cdot (\ket{\psi} \otimes \ket{0^m}) - W (U_{L}\ket{\psi} \otimes \ket{0^m}) + W (U_{L}\ket{\psi} \otimes \ket{0^m}) - U_1 \cdots U_{L}\ket{\psi} \otimes  \ket{0^m}} \\
    \leq \ & \underbrace{\norm{W (V_{L}\otimes I_{2^{m-m_{L}}}) \cdot (\ket{\psi} \otimes \ket{0^m}) - W (U_{L}\ket{\psi} \otimes \ket{0^m})}}_{\coloneqq S_1} + \underbrace{\norm{W (U_{L}\ket{\psi} \otimes \ket{0^m}) - U_1 \cdots U_{L}\ket{\psi} \otimes  \ket{0^m}}}_{\coloneqq S_2},
\end{align*}
where the last inequality follows from the triangle inequality. We now bound $S_1$ and $S_2$ separately. 
\begin{align*}
    S_1 &\leq \norm{W}\cdot \norm{(V_{L}\otimes I_{2^{m-m_{L}}}) \cdot (\ket{\psi} \otimes \ket{0^m}) - U_{L}\ket{\psi} \otimes \ket{0^m}} 
    \tag{$\norm{AB}\leq \norm{A}\cdot \norm{B}$} \\
    &\leq \norm{(V_{L}\otimes I_{2^{m-m_{L}}}) \cdot (\ket{\psi} \otimes \ket{0^m}) - U_{L}\ket{\psi} \otimes \ket{0^m}} \tag{$\norm{U}=1$ for any unitary $U$} \\
    &= \norm{V_L\cdot (\ket{\psi} \otimes \ket{0^{m_L}}) - U_L\ket{\psi} \otimes \ket{0^{m_L}}} \\
    &\leq \epsilon, \tag{$V_L$ implements $U_L$ to error $\epsilon$ and \Cref{def:cliffordT_approx}}
\end{align*}
where the last inequality is because $V_L\in \U(2^{n+m_L})$ implements $U_L$ to error $\epsilon$. 
\begin{align*}
    S_2 &= \norm{W (\ket{\psi'} \otimes \ket{0^m}) -  U_1\cdots U_{L-1}\ket{\psi'}\otimes \ket{0^m}} \tag{set $\ket{\psi'} \coloneqq U_{L}\ket{\psi}$} \\
    &\leq (L-1)\epsilon. \tag{induction hypothesis in \Cref{eq:telescope_induction_hypo}}
\end{align*}
Hence $S_1+S_2\leq L\epsilon$ which proves \Cref{eq:telescope_goal} and thus completes the induction. 
\end{proof}

\end{document}